\newtheorem{theorem}{Theorem}
\newtheorem{lemma}{Lemma}
\newtheorem{fact}{Fact}
\newtheorem{proposition}{Proposition}
\newtheorem{conjecture}{Conjecture}
\definecolor{light-gray}{gray}{0.95} 
\lstdefinestyle{mystyle}{
    basicstyle=\ttfamily\small, 
    commentstyle=\color{gray},
    backgroundcolor=\color{light-gray},
    morecomment=[f][\color{grey}][0]{\#},
    morecomment=[f][\color{grey}][0]{//},
    numbers=left,                    
    numbersep=5pt,    
    mathescape=true,
    keywordstyle=\textbf,
    morekeywords={while,if,then,else,true,false,do,:,parallel,in,def,until},
}
\newcommand{\transitionrelComment}[7][<]{& (#2) \; && #3 + #4 &&\stackrel{#1}{\longrightarrow} &&#5 + #6 \;
  &\triangleright\  #7\\}
\newcommand{\transitionrel}[6][<]{& (#2) \; && #3 + #4 &&\stackrel{#1}{\longrightarrow} &&#5 + #6  &\;\\}
\newcommand{\transitionrelnull}[6][<]{& (#2) \; && #3 + #4 &&\stackrel{#1}{\longrightarrow} &&#5 \  #6  &\;\\}
\newcommand{\transition}[5]{\transitionrel[]{#1}{#2}{#3}{#4}{#5}}
\newcommand{\transitionnull}[5]{\transitionrelnull[]{#1}{#2}{#3}{#4}{#5}}
\newcommand{\transitionreltwo}[6][]{& #3 + #4 &&\stackrel{#1}{\longrightarrow} &&#5 + #6  &\;\\}
\newcommand{\transitionrelnulltwo}[6][]{&  #3 + #4 &&\stackrel{#1}{\longrightarrow} &&#5 \  #6  &\;\\}
  \providecommand\BibTeX{{%
    \normalfont B\kern-0.5em{\scshape i\kern-0.25em b}\kern-0.8em\TeX}}}
\title{Selective Population Protocols}
\author{Adam Ga{\'n}czorz\\
Wroc\l aw University
\and Leszek G{\k a}sieniec\\
University of Liverpool
\and Tomasz Jurdzi{\'n}ski\\ 
Wroc\l aw University
\and Jakub Kowalski\\
Wroc\l aw University
\and Grzegorz Stachowiak\\
Wroc\l aw University}
\begin{document}

\maketitle

\begin{abstract}
The model of population protocols provides a universal platform to study distributed processes driven by pairwise interactions of anonymous agents. While population protocols present an elegant and robust model for randomized distributed computation, their efficiency wanes when tackling issues that require more focused communication or the execution of multiple processes. To address this issue, we propose a new, selective variant of population protocols by introducing a partition of the state space and the corresponding conditional selection of responders. We demonstrate on several examples that the new model offers a natural environment, complete with tools and a high-level description, to facilitate more efficient solutions.  

In particular, we provide fixed-state stable and efficient solutions to two central problems: leader election and majority computation, both with confirmation. This constitutes a separation result, as achieving stable and efficient majority computation requires $\Omega(\log n)$ states in standard population protocols, even when the leader is already determined. Additionally, we explore the computation of the median using the comparison model, where the operational state space of agents is fixed, and the transition function determines the order between (arbitrarily large) hidden keys associated with interacting agents. Our findings reveal that the computation of the median of $n$ numbers requires $\Omega(n)$ time. Moreover, we demonstrate that the problem can be solved in $O(n\log n)$ time, both in expectation and with high probability, in standard population protocols. In contrast, we establish that a feasible solution in selective population protocols can be achieved in $O(\log^4 n)$ time.
\end{abstract}


\section{Introduction}\label{s1}

\setlength\columnsep{14pt}

The standard model of population protocols originates from the seminal paper \cite{DBLP:conf/podc/AngluinADFP04}, providing tools suitable for the formal analysis of {\em pairwise interactions} between simple, indistinguishable entities referred to as {\em agents}. These agents are equipped with limited storage, communication, and computation capabilities.
When two agents engage in a direct interaction, their states change according to the predefined {\em transition function}, which is an integral part of the population protocol. The weakest possible assumptions in population protocols pertain to the fixed (constant size) operational {\em state space} of agents, and the size of the population $n$ is neither known to the agents nor hard-coded in the transition function.
It is assumed that a protocol starts in the predefined {\em initial configuration} of agents’ states representing the input, and it {\em stabilizes} in one of the {\em final configurations} of states representing the solution to the considered problem.
In the {\em probabilistic variant} of population protocols adopted here, in each step of a protocol, the {\em random scheduler} selects an ordered pair of agents: the {\em initiator} and the {\em responder}, which are drawn from the whole population uniformly at random. The lack of symmetry in this pair is a powerful source of random bits utilized by population protocols.
In the probabilistic variant, in addition to efficient {\em state utilization}, one is also interested in the {\em time complexity}, where the {\em sequential time} refers to the number of interactions leading to the stabilization of a protocol in a final configuration. More recently, the focus has shifted to the {\em parallel time}, or simply the {\em time}, defined as the sequential time divided by the size nn of the whole population. The (parallel) time reflects on the parallelism of simultaneous independent interactions of agents utilized in {\em efficient population protocols} that stabilize in time 
$O(\mbox{poly} \log n)$.
All protocols presented in this paper are {\em stable} (always correct) and guarantee stabilization time with high probability (whp) defined as $1-n^{-\eta},$ 
for a constant $\eta>0.$

There are already several efficient protocols known for solving central problems in distributed computing, including {\em leader election}~\cite{DBLP:conf/soda/AlistarhAEGR17,GasieniecS21,BerenbrinkGK20}, {\em majority computation}~\cite{DBLP:conf/soda/AlistarhAG18,DotyEGSUS21}, and the {\em plurality problem}~\cite{BankhamerBBEHKK22}. While these protocols are efficient in terms of time, they rely on non-constant state space utilization and operate indefinitely. That is, they are not able to declare stabilization with probability 1. Moreover, the most efficient protocols are often non-trivial and hard to analyze.
One can circumvent some of these deficiencies by relaxing probabilistic expectations, e.g., by dropping assumptions about the necessity of stabilization in protocols with predefined input \cite{DBLP:conf/podc/KosowskiU18, DBLP:journals/corr/abs-1802-06872}, as well as in self-stabilizing protocols~\cite{DBLP:conf/podc/BurmanCCDNSX21}. While such relaxation provides certain benefits, it does not solve some major deficiencies of the standard model, including depleting in time the number of meaningful interactions, limited computational power, and inefficient space-time trade-offs.

In order to circumvent some of these deficiencies,   
we propose a new {\em selective} variant of population protocols by imposing a simple {\em group (partition) structure} on the state space together with a conditional choice of the responder during random interacting pair selection.
This model provides a natural extension of {\em passive mobile} sensor networks adopted in~\cite{DBLP:conf/podc/AngluinADFP04}, where the focus is on single channel communication.
In the new model the agents communicate over multiple communication channels, where each channel corresponds to one of a fixed number of groups (partitions) of the state space. Specifically, only agents currently listening on some specific communication channel $C$ (their states belong to the corresponding group of states $\mathcal{G}_C$) are able to receive and respond to messages transmitted over this channel by agents with state indicating $\mathcal{G}_C$ as the {\em target group}.    
Alternative models with biased communication were previously used in the context of stochastic chemical reaction networks in~\cite{SoloveichikCWB08} and data collection with non-uniform schedulers in~\cite{DBLP:conf/podc/BurmanCCDNSX21}. The adopted selective model also refers to biased choices in nature studied earlier, in the context of small-world phenomena, where closer location in space results in a more likely interaction~\cite{Klein00}, or social preference, where agents with a greater array of similar attributes are more likely to know one another and, in turn, to interact~\cite{Ken98}.
A different motivation to study selective population protocols refers to more structural variant of population protocols known as {\em network constructors}, in which agents are allowed to be connected. As the expected parallel time to manipulate a specific edge is $\Theta(n),$ see, e.g.,~\cite{MS17,DBLP:conf/stacs/GasieniecSS23}, 
the design of truly efficient protocols in this model is not currently feasible. 
Utilizing the concept of selective population protocols, one can give greater probabilistic bias to interactions along already existing edges, and in turn enable more efficient computation, comparable with another variant known as graphical population protocols~\cite{DBLP:conf/opodis/AlistarhGR21, DBLP:conf/podc/AlistarhRV22}.  

\subsection{Our results}

In this paper, we present initial studies on the (parallel) efficiency and stability of selective population protocols. We begin by discussing fundamental properties of this new variant, introducing the notion of {\em fragmented parallel time} as an equivalent measure to parallel time in the standard population protocol model. Additionally, we highlight that selective protocols offer a natural mechanism for deterministic emptiness (zero) testing.
It is known, as indicated in \cite{DBLP:journals/dc/AngluinAE08a}, that such a test enables efficient simulation of $O(\log n)$-space Turing Machines with high probability. In contrast, we highlight that such simulations in selective protocols are not only efficient but also stable. Selective protocols can be utilized to design algorithms within this class that are both efficient and stable.
Furthermore, we present fixed-state efficient and stable solutions to two central problems: leader election and majority computation (with confirmation, i.e., all agents stabilize while being aware of the conclusion of the process). This result is noteworthy as stable efficient majority computation requires $\Omega(\log n)$ states in standard population protocols \cite{DBLP:conf/soda/AlistarhAG18}, even when the leader is given.
%
%
%
We also introduce the first non-trivial study on median computation in population protocols. We adopt a comparison model in which the operational state space of agents is constant, and the transition function determines the order between (arbitrarily large) hidden keys associated with the interacting agents. We demonstrate that computing the median of $n$ numbers requires $\Omega(n)$ parallel time and the problem can be solved in $O(n\log n)$ parallel time in expectation and with high probability (whp) in standard population protocols. In contrast, we present an efficient median computation in selective population protocols, achieving $O(\log^4 n)$ parallel time.
Furthermore, we delve into the appropriateness of selective protocols for the high-level design of algorithmic solutions.

\section{Selective Population Protocols}\label{s2}

As discussed in Section~\ref{s1}, in the standard population protocol model, the random scheduler draws consecutive pairs of interacting agents uniformly at random from the entire population. This is done irrespective of whether the states of interacting agents match some rule of the transition function or not. Consequently, many random pairwise interactions do not result in a transition and, in turn, do not bring the population closer to a final configuration.

In {\em selective population protocols} random selection of interacting pairs is 
realised differently.
In particular, the fixed-state space of agents
$\mathcal{S}$ is partitioned into $l$ 
groups of states
$\mathcal{G}_1,\mathcal{G}_2,\ldots,\mathcal{G}_l.$ 
In addition, 
any state $s\in \mathcal{G}_j,$ 
is mapped onto its {\em target group} $\mathcal{G}_{i(s)}.$ 
We say that an interaction is {\em internal} if $s\in {\mathcal G}_{i(s)},$ and is {\em external} otherwise. 
%
This mapping is used during an attempt to form a random pair of interacting agents.
The random scheduler first draws the initiator uniformly from the whole population. 
This is followed by drawing the responder uniformly from all agents (different to the initiator) currently residing in states belonging to target group $\mathcal{G}_{i(s)}$,
where $s$ is the state of the initiator. 
%
%
The rules of the transition function manage two different outcomes of interaction attempts:

\begin{enumerate}
\item {\em Biased communication} 
\begin{description}
    \item[Meaningful interaction] (successful biased interaction attempt) 
    \nopagebreak\\
    $s+\mathcal{G}_{i(s)}|t\rightarrow s'+t'.$
\end{description}
The purpose of meaningful interactions is to advance and in turn to maintain efficiency of the computing process.\\

\item {\em Interaction availability test}
\begin{description}    
    \item[Emptiness (zero) test] (unsuccessful external interaction attempt) 
    \nopagebreak\\
    $s+\mathcal{G}_{i(s)}|null\rightarrow s'.$
    \item[Singleton test] (unsuccessful internal interaction attempt)
    \nopagebreak\\
    $s+\mathcal{G}_{i(s)}|null\rightarrow s'.$
\end{description}
The two tests are mainly used to determine completion of computation processes.
\end{enumerate}
 


\noindent
{\bf One-way epidemic} Consider a  communication primitive known as {\em one-way epidemic}~\cite{DBLP:journals/dc/AngluinAE08a} in which state $1$ of the source agent is propagated to all other agents initially being in state $0.$ The transition function has only one rule in the standard population protocol model
$$ 1 + 0 \rightarrow 1 + 1. $$
It is known that such epidemic process is stable and efficient, i.e., one-way epidemic stabilises with the correct answer in parallel time $O(\log n)$ whp, however 
during the final stages of the epidemic process the expected fraction of {\em meaningful interactions} 
decreases dramatically. Assume, that the state space $S=\{0,1\}$ is partitioned into two singleton groups $\mathcal{G}_0=\{0\}$ and $\mathcal{G}_1=\{1\}$ in the new variant, and
we have two transition rules instead:

\begin{small}
\begin{tcolorbox}
\begin{multicols}{2}
\noindent
\begin{alignat*}{5}
    \transition{1}{1}{\mathcal{G}_0|0}{1}{1}
\end{alignat*}
\begin{alignat*}{5}
    \transitionnull{2}{1}{\mathcal{G}_0|null}{Stop}{}
\end{alignat*}
\end{multicols}
\vspace{-3em}
\end{tcolorbox}
\end{small}
%

Now every interaction initiated by an agent in state $1$ is either meaningful, when there are still uninformed agents, or changes the initiator's state to $Stop,$ which indicates the end of the epidemic process, and in turn the next stage of computation not requiring group ${\mathcal G}_0$.

\subsection{Beyond Presburger Arithmetic}
\label{Beyond}


%

We find in~\cite{DBLP:journals/dc/AngluinAE08a} that the emptiness test, also known as {\em zero test}, is a powerful tool enabling efficient simulation of $O(\log n)$-space Turing Machine. 
The two-stage randomised simulation from~\cite{DBLP:journals/dc/AngluinAE08a} is based on simulation of {\em Register Machines} known to be equivalent with $O(\log n)$-space Turing Machines~\cite{minsky_67}. 
This approach hinges on the presence of a unique leader, crucial for achieving efficient and stable computations, which efficient computation requires at least $\Omega(\log\log n)$ states~\cite{DBLP:conf/soda/AlistarhAEGR17}.
%
An alternative randomized two-stage simulation of Turing Machines, detailed in~\cite{SoloveichikCWB08} within the context of a related {\em stochastic chemical reaction network} model, utilises the concept of {\em clockwise Turing Machines}~\cite{NearyW09}. Both simulations rely on zero tests, the correctness of which can be assured only with high probability in the adopted models, rendering them unsuitable for deployment in stable protocols.
%
%
In~contrast, selective protocols equipped with deterministic emptiness test provide a suitable platform for the design of efficient and stable {\em fixed-state} solutions in $O(\log n)$-space complexity class.

As the primary focus of this paper centers on the (parallel) {\em efficiency} of selective population protocols, and the computational power of such protocols is inherited from standard population protocols, we direct the reader to~\cite{DBLP:journals/dc/AngluinAE08a} for full simulation details. Instead, our current study delves into the parallelism of selective protocols, presenting several separation results. This includes {\em majority computation}, where any efficient stable algorithm in standard protocols requires $\Omega(\log n)$ states while a fixed-state space allows for an $O(\log n)$-time stable solution, as demonstrated in Section \ref{Majority}.

Several efficient algorithms presented in this paper follow a more direct approach, relying on a single stabilization process. Examples include the efficient and stable leader election and majority computation discussed in Sections~\ref{LE} and~\ref{Majority}, respectively.
However, in more complex solutions, the need arises for a leader to act as the {\em "program counter,"} overseeing the proper execution of potentially numerous individual stabilization processes encoded in the transition function in the correct order. This encompasses the preparation of input for each individual process, ensuring its proper termination, and further interpreting the output. It's worth noting that, due to state partitioning in selective protocols, each individual stabilization process can be executed on a distinct partition of states. This allows several processes to run efficiently at the same time, as demonstrated in the efficient ranking protocol presented in Section~\ref{final}, where multiple leaders are employed.
The leader is also responsible for translating the output from one process to the input of its successor. This is achieved by rewriting states (from one partition to another) via one-way epidemic. Ultimately, the termination of any process, including rewriting, is recognized through either an emptiness or singleton test.

As a warm up, we present here an example of (inefficient) leader based multiplication protocol. In this problem, the input is formed of two subpopulations of agents $X$ and $Y,$ and the task is to create a new
collection $Z$ of size $|X|\cdot |Y|.$ Please note that an efficient multiplication protocol can be found in the Appendix, in Section~\ref{FM}.

\subsubsection{Example of Leader Driven Selective Protocol}\label{SM}

%
%
%
%
%
%
%
%
During execution of the multiplication protocol, the leader deletes agents from $X$ one by one, where each agent deletion coincides with enlargement of $Z$ by $|Y|$ elements.
Note that for simplicity of the argument our multiplication protocols provide construction of set $Z$ rather than solving the decision problem $|X|\cdot |Y|\le |Z|.$ 
In a solution to the decision problem one would remove (one by one) agents from set $Z$ rather than adding them to it. Nevertheless, the logic and the complexity of both solutions are exactly the same.

The {\em state space}
$\mathcal{S}=\{L_{in}, L_Y, L_{\hat y}, L_Z, L_{out}, X, Y, \hat{Y}, Z, free\}$ of the multiplication protocol is 
partitioned into five target groups including singletons 
$$G_X=\{X\},
G_Y=\{Y\}, 
G_{\hat Y}=\{\hat Y\}, 
G_Z=\{Z\}, 
$$
were groups $G_X,G_Y$ and $G_Z$ coincide with collections of agents $X,Y$ and $Z$ respectively, and $G_{\hat Y}$ contains temporarily marked agents from $Y.$
The remaining agents, i.e., the leader and all free (unused) agents have their states in
$$G_R=\{L_{in},L_Y,L_Z,L_{\hat Y},L_{out},Free\},$$ 
where the leader enters the protocol in state $L_{in}$ and 
concludes in state  $L_{out}$. 
State $L_{Y}$ is used to mark an agent in state $Y$ (already used in the current round of multiplication) as $\hat Y$ and state $L_{Z}$ to create a new agent in state $Z.$
Finally
$L_{\hat Y}$ is used to unmark all agents in state $\hat Y$ by reverting their state to $Y.$

The {\em transition function} of the protocol has seven rules
\begin{small}
\begin{tcolorbox}
\begin{multicols}{2}
    \noindent 
    \begin{alignat*}{5}
    \transition{1}{L_{in}}{\mathcal{G}_X|X}{L_Y}{\text{Free}}
    \transitionnull{2}{L_{in}}{\mathcal{G}_X|null}{L_{out}}{}
    \transition{3}{L_Y}{\mathcal{G}_Y|Y}{L_Z}{\hat{Y}}
    \transitionnull{4}{L_Y}{\mathcal{G}_Y|null}{L_{\hat{Y}}}{}
    \end{alignat*}
    \begin{alignat*}{5}
    \transition{5}{L_Z}{\mathcal{G}_R|\text{Free}}{L_Y}{Z}
    \transition{6}{L_{\hat{Y}}}{\mathcal{G}_{\hat{Y}}|\hat{Y}}{L_{\hat{Y}}}{Y}
    \transitionnull{7}{L_{\hat{Y}}}{\mathcal{G}_{\hat{Y}} | null}{L_{in}}{}
    \end{alignat*}
\end{multicols}
\vspace{-3em}
\end{tcolorbox}
\end{small}

Rule (1) deletes one agent in $X$ and prompts the leader now in state $L_Y$ to mark one agent in $Y$ as $\hat Y$ using rule (3).
Subsequently, the leader now in state $L_Z$ inserts one agent to $Z$ reverting its state to $L_Y$ using rule (5). 
The alternating process of marking $Y$'s and forming new $Z$'s concludes when all agents from $G_Y$ are moved to $G_{\hat Y}.$  
This is when rule (4) instructs the leader now in state $L_{\hat Y}$ to unmark all $\hat Y$'s, i.e., moving all agents in $G_{\hat Y}$ back to $G_Y$ using rule (6).
And when unmarking is finished the leader state is changed to $L_{in}$ by rule (7) to make another attempt to delete an agent from $X.$ And when this is no longer possible rule (2) concludes the multiplication protocol by changing the state of the leader to $L_{out}.$

\subsection{Parallelism of Selective Protocols}
\label{Par}

%
Recall that in population protocols, the (parallel) time of a sequence $I$ of interactions is defined as $|I|/n$. This definition is motivated by the observation that in a sequence of $xn$ interactions, each agent has, on average, $x$ interactions. However, it is noteworthy that only for $x=\Omega(\log n)$ does each agent engage in $\Theta(x)$ interactions whp.
An interesting finding presented in \cite{DBLP:journals/ipl/CzumajL23} demonstrates that in this latter case, the sequence $I$ can be simulated in time $\Theta(x)$ on a parallel computer whp.
In the new variant, where the choice of the responder is likely biased, we must adopt a more nuanced definition of parallelism.

\subsubsection{Fragmented  Parallel Time}
In the novel selective variant of population protocols, the initiator is uniformly chosen at random. Consequently, in a sufficiently long sequence of interactions, any agent serves as the initiator with the same frequency, aligning with the pattern observed in the standard model. This stands in contrast to the selection of responders, where certain agents are more likely to be chosen than others.

\noindent {\bf Example.}  Consider an epidemic process with the state space $S=\{0,1,1^*\}$ partitioned into groups: $\mathcal{G}_0=\{0\}$ with uninformed agents, $\mathcal{G}_1=\{1\}$ with active informers, and $\mathcal{G}_{1^*}=\{1^*\}$ with informed and already rested agents, governed by two transition rules:
\begin{small}
\begin{tcolorbox}
\begin{multicols}{2}
\noindent
\begin{alignat*}{5}
    \transition{1}{0}{\mathcal{G}_1|1}{1^*}{1}
\end{alignat*}
\begin{alignat*}{5}
    \transitionnull{2}{1}{\mathcal{G}_0|null}{1^*}{}
\end{alignat*}
\end{multicols}
\vspace{-3em}
\end{tcolorbox}
\end{small}
%
%
If in the initial configuration there is exactly one informed agent in state $1,$ all other agents in state $0$ contact this agent to get informed and rest, see rule (1). In the last meaningful
 interaction rule (2) rests the unique informer. While the number of interactions before stabilisation with all agents resting in state $1^*$ is $O(n\log n)$ whp, the parallelism of this epidemic process is very poor as only one agent informs others as the responder.

%
This potential imbalance in the workload of individual agents can be captured by tracking the frequency at which agents act as responders.
To handle this imbalance, we propose a more subtle definition of (parallel) time.
This new definition is whp asymptotically equivalent to the definition and the properties of time used in the standard model, see Lemma~\ref{subtle}.

Consider ways to divide the sequence of interactions $I$
into subsequent disjoint {\em chunks}, where each chunk is a sequence 
of consecutive interactions in which any agent has at most $10\ln n$ interactions as the  responder.
If the minimum number of chunks for such divisions is $k$, then
we say that the {\em fragmented parallel time}, or in short the {\em fragmented time}, is
$T_F=k\ln n$.

As defined in Section~\ref{s1}, $\eta$ is the quality parameter in the definition of high probability.

\begin{lemma}\label{subtle}
Consider a sequence of interactions $I$ in the standard population protocol model executed in time $T=|I|/n.$ 
If the fragmented time $T_F=k\ln n,$
for $k>11\eta/35$ and large enough $n$,
then
$T/10\le T_F \le 2T$
whp.
\end{lemma}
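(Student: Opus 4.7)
I would split the claim into its deterministic lower bound $T_F \ge T/10$ and its probabilistic upper bound $T_F \le 2T$, and handle them separately.

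The lower bound is a plain counting argument. By the definition of a chunk, each of the $n$ agents is selected as responder at most $10 \ln n$ times within a chunk, so a chunk contains at most $10 n \ln n$ interactions in total, since each interaction contributes exactly one responder. Summing the lengths of the $k$ chunks of an optimal partition of $I$ yields $|I| \le 10 n k \ln n$, i.e.\ $T_F = k \ln n \ge |I|/(10 n) = T/10$. No randomness is used here.

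For the upper bound I would exhibit an explicit partition witnessing $T_F \le 2T$. Cut $I$ into $k' = \lceil 2|I|/(n \ln n)\rceil$ consecutive blocks of length $L \approx (n \ln n)/2$. For any such block and any agent $v$, the number of times $v$ is drawn as responder is distributed as $\mathrm{Bin}(L, 1/n)$ with mean roughly $(\ln n)/2$, because in the standard model each ordered pair is uniform over $n(n-1)$ choices and the marginal probability of responding in a single step is $1/n$. A standard multiplicative Chernoff bound then yields $\Pr[\mathrm{Bin}(L, 1/n) > 10 \ln n] \le (e/20)^{10 \ln n} \le n^{-c}$ for a constant $c$ which can be made arbitrarily large by mildly shrinking $L$. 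A union bound over the $n$ agents and the $k'$ blocks makes every block a valid chunk with probability at least $1 - n^{-\eta}$, provided $c$ beats $\eta$ by the appropriate margin; the exhibited partition then witnesses $T_F \le k' \ln n$.

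The main obstacle is getting the clean constant $2$ on the right-hand side: rounding $2|I|/(n\ln n)$ up to an integer only gives $k' \ln n \le 2T + \ln n$ a priori, and the additive $\ln n$ is absorbed into the factor $2$ only once $T$ is a sufficiently large multiple of $\ln n$. This is exactly the role of the hypothesis $k > 11\eta/35$: together with the just-proved $T \le 10 T_F$, it forces $|I|$ to exceed a fixed multiple of $\eta\, n \ln n$, which in turn makes the rounding term $\ln n$ negligible against $2T$. The only delicate step is calibrating $L$ against $\eta$ so that the Chernoff exponent dominates the union-bound size $n \cdot k'$ with margin at least $\eta$; the specific numerical constant $11/35$ is engineered precisely for this balance. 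Apart from this bookkeeping, the whole probabilistic content of the proof is a single Chernoff estimate.
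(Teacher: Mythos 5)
Your lower bound $T/10 \le T_F$ is exactly the paper's argument and is fine, and your upper-bound strategy --- exhibiting an explicit equipartition into blocks of length $\approx (n\ln n)/2$ and verifying by a Chernoff bound plus a union bound that every block is a valid chunk --- is a genuinely different route from the paper's: the paper instead analyses the \emph{optimal} chunk decomposition itself and shows that whp at least half of its $k$ chunks each span at least $n\ln n$ interactions, which yields $T \ge \frac{k}{2}\ln n = T_F/2$ directly. Your Chernoff estimate for a single agent in a single block is correct.

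The gap is in the final step, where you convert $T_F \le k'\ln n \le 2T + \ln n$ into $T_F \le 2T$. You claim the additive $\ln n$ is absorbed because the hypothesis $k > 11\eta/35$ ``together with the just-proved $T \le 10T_F$'' forces $|I|$ to exceed a fixed multiple of $\eta n\ln n$. But $T \le 10T_F$ is an \emph{upper} bound on $T$ and cannot force $|I|$ to be large; no deterministic lower bound of the form $|I| = \Omega(n\ln n)$ follows from the hypotheses (a sequence in which one agent is the responder in every interaction needs $k$ chunks while containing only about $10k\ln n$ interactions --- such sequences are merely unlikely, and that unlikeliness is exactly the event you are trying to control, so invoking it here would be circular). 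Even granting a bound $T \ge c_\eta \ln n$, you would only obtain $2T + \ln n \le (2 + 1/c_\eta)T$, i.e.\ $T_F \le CT$ for some constant $C > 2$ depending on $\eta$, not the stated $T_F \le 2T$. The paper's route avoids this entirely because lower-bounding the lengths of the optimal chunks introduces no additive rounding term. A secondary point: your union bound ranges over $n\cdot k'$ events and therefore needs $k'$ (hence $|I|$) polynomial in $n$, whereas the paper's bound $\binom{k}{k/2}\left(n^{-70/11}\right)^{k/2} < \left(2n^{-35/11}\right)^{k}$ decays geometrically in $k$ and imposes no such restriction.
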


\begin{proof}
The total number of interactions during fragmented time $T_F$ does not exceed $10kn\ln n$, so we always have $T/10\le T_F$.
It remains to show that $T_F\le 2T$ whp.

Let us first estimate the probability that a given chunk
corresponds to time smaller than $\ln n$.
This probability is not greater than the
probability that in time $\ln n$ (starting at the beginning of the chunk) some
agent has the responder type interactions greater than $10\ln n$.
By Chernoff bound\footnote[4]{
We utilise the Chernoff bound variant:
$\Pr\left(X>(1+\delta)\mathbb{E}X\right)<
\exp\left(-\delta^2 \mathbb{E}X/(1+\delta)\right)$ for $\delta>0$.
}, the probability that in time $\ln n$ a given agent experiences $X>10\ln n=10\mathbb{E}X$ interactions as the responder can be estimated by
\begin{align*}
\Pr(X>10\ln n)=&\Pr\left(X>(1+9)\mathbb{E}X\right)\\ <& \exp\left(-\frac{9^2}{2+9}\mathbb{E}X\right)
= \exp\left(-\frac{81}{11}\ln n\right) = n^{-81/11}.
\end{align*}
By the union bound the probability that in time $\ln n$ some agent interacts as the responder more
than $10\ln n$ times is smaller than $n^{-70/11}$.
Thus, for $n$ large enough and $11\eta/35<k,$
the probability that at least half of $k$ chunks correspond to time smaller than $\ln n$
does not exceed
$$\binom{k}{k/2} \left(n^{-70/11}\right)^{k/2}< 2^k n^{-35k/11}<(2^kn^{-k/5})n^{-\eta}<n^{-\eta}.$$
In turn, whp we obtain time at least $\frac{k}2\ln n$, and $T_F\le 2T$.
\end{proof}

Now we formulate a lemma allowing us to analyze fragmented time 
in the new model. This lemma will be used in the analysis of leader election and majority computation protocols.

\begin{lemma}\label{YouOnlyTakeOnceLemma}
Consider an interval of interactions $I$, s.t., $|I|>\frac{110}{35}\eta n\ln n$.
If~every agent acts as the responder in an
external interaction in $I$ at most once,
then the fragmented parallel time of $I$ is $\Theta(|I|/n)$ whp.
\end{lemma}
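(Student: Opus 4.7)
The plan is to establish the two bounds $T_F = \Omega(|I|/n)$ and $T_F = O(|I|/n)$ separately. The lower bound is deterministic and immediate: any legal chunk has length at most $10n\ln n$, because each of the $n$ agents can serve as responder at most $10\ln n$ times within a chunk and every interaction uses exactly one responder. Hence $k \geq |I|/(10n\ln n)$, and therefore $T_F = k\ln n \geq |I|/(10n)$.

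For the upper bound I would exhibit a specific partition of $I$ into chunks and argue that whp all of them are legal. Concretely, cut $I$ into consecutive pieces $C_1,\ldots,C_m$ of length $\lfloor n\ln n\rfloor$ each (with a possibly shorter trailing piece), so that $m = O(|I|/(n\ln n))$. It then suffices to show that whp no agent serves as responder more than $10\ln n$ times in any single $C_j$; for then $k\leq m$ and $T_F = k\ln n = O(|I|/n)$.

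The key step is the concentration argument on a single $(C_j,\text{agent})$ pair. By hypothesis, each agent is the responder of an external interaction at most once throughout all of $I$, so in particular at most once inside $C_j$. Its remaining responder roles inside $C_j$ come from non-external interactions, whose responder distribution is uniform over the population, so the expected count is at most $\ln n$. Applying the same Chernoff bound used in the proof of Lemma~\ref{subtle} with $X$ equal to this non-external responder count, the probability that $X > 10\ln n - 1$ is at most $n^{-\Omega(1)}$ (the exponent $81/11$ from that lemma suffices). Taking a union bound over the $n$ agents and the $m = O(|I|/(n\ln n))$ pieces, and using the hypothesis $|I| > \tfrac{110}{35}\eta n\ln n$ to absorb the polynomial factors exactly as in Lemma~\ref{subtle}, the total failure probability is at most $n^{-\eta}$.

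The main obstacle is not the Chernoff estimate, which mirrors the one already established, but justifying rigorously that conditional on the (at most one) external responder role of each agent, the non-external responder selections are indeed uniform over the population and hence amenable to the Chernoff bound. Once this independence/uniformity structure is formalised, the union bound closes the argument and yields $T_F = \Theta(|I|/n)$ whp.
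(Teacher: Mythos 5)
Your lower bound and the overall shape of the upper bound match the paper: the paper also reduces everything to the observation that, at every step and regardless of the current group structure, the probability that a fixed agent is the responder of an \emph{internal} interaction is at most $1/n$ (for a group of size $g>1$ this is $\frac{g-1}{n}\cdot\frac{1}{g-1}$), so that over $n\ln n$ interactions the expected count is at most $\ln n$, a Chernoff bound gives failure probability $n^{-81/11}$ per agent, and the single permitted external responder role supplies the ``$+1$'' that brings the per-chunk budget to $10\ln n$. Two remarks on your version of this step. First, your justification ``the responder distribution is uniform over the population'' is not accurate in the selective model -- the responder of an internal interaction is uniform over the initiator's group minus the initiator -- but the quantity you actually need, namely that the conditional probability of the event at each step given the history is at most $1/n$, is exactly what the paper computes, and it immediately yields domination by a $\mathrm{Bin}(n\ln n,1/n)$ variable. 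So the ``main obstacle'' you flag at the end is dispatched by this standard domination argument and is not a real obstacle.

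The substantive divergence is in how you aggregate. You cut $I$ into fixed pieces of length $n\ln n$, demand that \emph{every} piece be a legal chunk, and take a union bound over all $m=\Theta(|I|/(n\ln n))$ pieces and $n$ agents. This costs a factor $m$, so your failure probability is $m\,n^{-70/11}$, which is only ``whp'' when $m$ is suitably polynomially bounded; the lemma as stated places no upper bound on $|I|$, and your appeal to the hypothesis $|I|>\frac{110}{35}\eta n\ln n$ cannot repair this, since that hypothesis is a \emph{lower} bound on $|I|$ and does not cap $m$. The paper instead defines the chunks as maximal subseries (stopping as soon as some agent accumulates $10\ln n-1$ internal responder events), shows each individual subseries is short with probability at most $n^{-70/11}$, and then bounds the probability that \emph{at least half} of the $k$ subseries are shorter than $n\ln n$ by $\binom{k}{k/2}(n^{-70/11})^{k/2}<2^k n^{-35k/11}$; this bound improves as $k$ grows, and the hypothesis on $|I|$ enters precisely here, guaranteeing $k>11\eta/35$ so that the bound is below $n^{-\eta}$. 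A majority of long subseries already forces $k\le 2|I|/(n\ln n)$, which is all that is needed. In the regime where the lemma is actually invoked ($|I|=O(n\log n)$ whp) your union bound is harmless, but to prove the lemma as stated you should replace the all-pieces union bound by the majority-of-chunks argument (or otherwise make the failure probability decrease with the number of pieces).
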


\begin{proof}
   The total number of chunks of the fragmented time is at least $k\ge \frac{|I|}{10n\log n}$,
so the fragmented time $T_F\ge\frac{|I|}{10n}$.
We show that the fragmented time $T_F\le \frac{2|I|}{n}$ whp.

Consider a fixed agent in  a given interaction.
We first observe that the probability of an event $A$, that an interaction is internal and this agent acts as the responder is at most $1/n$.
For an agent belonging to  a group of size 1, when this agent can be counted as both the initiator and the responder, the probability of event $A$ does not exceed $1/n$.
For an agent belonging to
a group of size $g>1$ this probability is at most $\frac{g-1}{n}\cdot\frac{1}{g-1}=\frac{1}{n}$.

Let us divide all interactions into maximal subseries such that
for any fixed agent there are at most $10\ln n-1$ events $A$ involving this agent.
Note that these subseries are simultaneously chunks
of interactions in which any agent acts as the responder
at most $10\ln n$ times since any agent is a responder in $I$ in an external interaction at most once.
Let us first estimate the probability that a given subseries has less than $n\ln n$ interactions.
This probability is not greater than the
probability that during $n\ln n$ interactions (counting from the beginning of the subseries) event
$A$ happens for some agent at least $10\ln n$ times.
Using calculations from Lemma~\ref{subtle}, one can estimate that
this probability for a specific agent is smaller than $n^{-81/11}$.

By the union bound the probability that in 
a given subseries event $A$ occurs for some agent at least $10\ln n$ times does not exceed $n^{-70/11}$.
Analogously to the proof of Lemma \ref{subtle}, for sufficiently large $n$ and $k \ge \frac{|I|}{10n\ln n} > \frac{11\eta}{35}$, the probability that at least half of the $k$ subseries correspond to times smaller than $\ln n$ is less than $n^{-\eta}$.

As this event occurs with negligible probability, we obtain 
time at least $\frac{2|I|}{n}$ whp.
\end{proof}

Recall that if a group is a singleton, an
attempt to execute pairwise interaction within this group fails. 
This is observed by the initiator via singleton test.
Note also that such failed interactions do not affect parallelism as
each failed interaction is attributed to the initiator.

The next lemma sanctions the analysis of more complex protocols utilising the leader.

\begin{lemma}\label{TakeOnceAndLeader}
Consider an interval of interactions $I$, s.t., $|I|>\frac{60}{13}\eta n\ln n$.
Assume also that every agent acts as the responder in an
external interaction, which is not initiated by the leader in $|I|$ at most once,
then the fragmented time of $I$ is $\Theta(|I|/n)$ whp.
\end{lemma}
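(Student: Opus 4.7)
The plan is to adapt the proof of Lemma~\ref{YouOnlyTakeOnceLemma} by absorbing a second source of responder-type hits on a fixed agent, namely the interactions initiated by the leader. The outer structure --- partition $I$ into chunks of the fragmented time, show that any single candidate subseries is unlikely to be short, and finish by a binomial union bound --- carries over unchanged; only the per-interaction probability and the constants are new.

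The lower bound $T_F \ge |I|/(10n)$ is immediate from the chunk definition, exactly as in the previous proof. For the upper bound $T_F \le 2|I|/n$ whp I would redefine the ``bad'' event as follows: for a fixed agent $v$, let $A$ be the event that in a single interaction $v$ acts as the responder in either (i) an internal interaction, or (ii) an external interaction initiated by the leader. By hypothesis, the only remaining responder contributions come from external non-leader interactions, of which there is at most one per agent throughout $I$. Hence any subseries in which no agent experiences more than $10\ln n - 1$ events of type $A$ is automatically a valid chunk in the fragmented-time sense.

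The core inequality will be $\Pr[A] \le 2/n$ per interaction: the internal contribution $\le 1/n$ is exactly as in Lemma~\ref{YouOnlyTakeOnceLemma}, and the leader contribution $\le 1/n$ follows because the leader is drawn as initiator with probability $1/n$ and then selects its responder uniformly from its current target group of some size $g\ge 1$, hitting $v$ with probability at most $1/g$. Consequently the expected $A$-count of a fixed agent in a window of $n\ln n$ interactions is at most $2\ln n$, and a Chernoff bound with $\delta = 4$ gives probability at most $\exp(-\tfrac{16}{6}\cdot 2\ln n) = n^{-16/3}$ that this count reaches $10\ln n$. Union-bounding over the $n$ agents yields $n^{-13/3}$ per candidate subseries.

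Plugging this into the ``at least half of the $k$ subseries are short'' estimate $\binom{k}{k/2}(n^{-13/3})^{k/2} \le 2^k n^{-13k/6}$, and using $k\ge |I|/(10n\ln n) > 6\eta/13$ (which is exactly what the hypothesis $|I| > \tfrac{60}{13}\eta n\ln n$ is engineered to give) makes the failure probability at most $n^{-\eta}$ for $n$ large enough, completing the argument. The main subtlety --- and the reason $\Pr[A]$ only doubles rather than blowing up with the leader's presence --- is that although the leader can target the same agent as responder arbitrarily often, it is itself a single distinguished agent and so is chosen as initiator in only a $1/n$ fraction of interactions; this is precisely what tames its extra contribution and keeps the analysis parallel to the leader-free case.
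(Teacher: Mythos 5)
Your proposal follows the paper's own proof essentially verbatim: the same lower bound from the chunk count, the same event $A$ (responder in an internal or leader-initiated interaction) with $\Pr[A]\le 2/n$, the same decoupled Chernoff bound with $\delta=4$ yielding $n^{-16/3}$ per agent and $n^{-13/3}$ per subseries after the union bound, and the same binomial estimate $\binom{k}{k/2}(n^{-13/3})^{k/2}<2^k n^{-13k/6}<n^{-\eta}$ for $k>\frac{6}{13}\eta$. No substantive differences to report.
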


\begin{proof}
The total number of chunks of the fragmented time is  $k\ge \frac{|I|}{10n\ln n}$,
so the fragmented time $T_F\ge\frac{|I|}{10n}$.
We show, that the fragmented time $T_F\le \frac{2|I|}{n}$ whp.

Let us consider a fixed agent in  a given interaction.
We first observe that the probability of an event $A$, that an interaction is internal or with the leader and the agent is a responder is at most $2/n$.
For an agent belonging to  a group of size 1, when this agent can be counted as the initiator and the responder, the probability of event $A$ does not exceed $2/n$.
For an agent belonging to
a group of size $g>1$ this probability is at most $\frac{g-1}{n}\cdot\frac{1}{g-1}+\frac{1}{n}=\frac{2}{n}$.

Let us divide interactions into maximal subseries such that
for any fixed agent there are at most $10\ln n-1$ events $A$ involving this agent.
Note that these subseries are simultaneously chunks
of interactions in which any agent acts as the responder
at most $10\ln n$ times since any agent is a responder in $|I|$ at most once in an external interaction with a non-leader.
Let us first estimate the probability that a given subseries has less than $n\ln n$ interactions.
This probability is not greater than the
probability that during $n\ln n$ interactions (starting at the beginning of the subseries) event
$A$ happens for some agent at least $10\ln n$ times.
By Chernoff bound, the probability that for a given agent event $A$ in $n\ln n$ interactions occurs $X\ge 10\ln n\ge 5\mathbb{E}X$ can be estimated as follows
\begin{align*}
\Pr(X\ge 10\ln n)&\le\Pr\left(X\ge (1+4)\mathbb{E}X\right)\\
&\le\exp\left(-\frac{4^2}{2+4}\mathbb{E}X\right) 
=\exp\left(-\frac{16}{6}2\ln n\right)=n^{-16/3}.
\end{align*}
By the union bound the probability that in 
a given subseries event $A$ occurs for any agent more
than $10\ln n-1$ times does not exceed $n^{-13/3}$.
Thus, for $n$ large enough and $k>\frac{|I|}{10n\log n}>\frac{6}{13}\eta$
the probability that at least half of $k$ subseries correspond to parallel time smaller than $\ln n$
is less than
$$\binom{k}{k/2} \left(n^{-13/3}\right)^{k/2}< 2^kn^{-13k/6}<n^{-\eta}.$$
As this event occurs only with negligible probability, we have parallel time  at least $\frac{2|I|}{n}$ whp.
\end{proof}

\subsection{Leader Election}\label{LE}

In {\em leader election} 
(with confirmation) 
in the initial configuration at least one agent is a candidate to become a {\em unique leader}, and all other agents begin as followers. 
The main goal in leader election is to distinguish and report selection of the unique leader, and to declare all other agents as followers. 
The state space of the leader election protocol presented below is $S=\{L,L^*,F,F^*\},$
where all initial leader and follower candidates are in states $L$ and $F,$ respectively.
The remaining states include $L^*$ referring to the confirmed unique leader,
and
$F^*$ utilised by confirmed followers.
The state space is partitioned into two groups
$\mathcal{G}_0=\{L,L^*,F^*\}$ and $\mathcal{G}_1=\{F\}.$

{\bf LE-protocol:} As at least one agent starts in state $L,$
these agents target group $\mathcal{G}_0$ using a double rule (1) and (2) and when state $L^*$ is eventually reached, with exactly one agent being in this state, the epidemic process defined by the transition rules (3)-(4) informs all followers about  successful leader election.\nopagebreak
\begin{small}
\begin{tcolorbox}
\begin{multicols}{2}
\noindent
\begin{alignat*}{5}
\transition{1}{L}{\mathcal{G}_0|L}{L}{F}
\transitionnull{2}{L}{\mathcal{G}_0|null}{L^*}{}
\end{alignat*}
\begin{alignat*}{5}
\transition{3}{L^*}{\mathcal{G}_1|F}{L^*}{F^*}
\transition{4}{F^*}{\mathcal{G}_1|F}{F^*}{F^*}
\end{alignat*}
\end{multicols}
\vspace{-3em}
\end{tcolorbox}
\end{small}

\begin{lemma}\label{LE-lemma}
The fragmented time of LE-protocol is
$O(\log n)$ whp.
\end{lemma}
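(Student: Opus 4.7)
The plan is to split the execution into two phases and apply Lemma~\ref{YouOnlyTakeOnceLemma} to lift a sequential-time bound to a fragmented-time bound.

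\textbf{Phase structure and external responders.} Phase~1 lasts until rule~(2) fires, leaving exactly one agent in state $L^*$ and $n-1$ agents in state $F$; Phase~2 is the resulting epidemic driven by rules~(3)--(4). Among the four rules, only (3) and (4) cross group boundaries (initiator in $\mathcal{G}_0$, responder in $\mathcal{G}_1$), and each of them moves the responder from $F$ to $F^*$; since $F^*$ is never a responder in any cross-group rule, every agent acts as an external responder at most once during the whole protocol.

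\textbf{Sequential time of each phase.} In Phase~1, while $k$ leaders remain the target group $\mathcal{G}_0$ contains only those $k$ leaders (no $L^*$ or $F^*$ exists yet), so an $L$-initiator (occurring with probability $k/n$ per step) fires rule~(1) when $k\geq 2$ and the singleton test of rule~(2) when $k=1$. Hence Phase~1 is a sum of independent geometric waiting times with parameters $k/n$ for $k=1,\dots,n_L$, with mean $nH_{n_L}=O(n\log n)$, and the same $O(n\log n)$ bound holds whp by standard coupon-collector tail estimates. In Phase~2, while $m$ uninformed $F$'s remain, any initiator in $\{L^*,F^*\}$ (chosen with probability $(n-m)/n$) reliably informs one $F$ via rule~(3) or~(4), giving expected phase length $\sum_{m=1}^{n-1} n/(n-m)=nH_{n-1}=O(n\log n)$ and the same bound whp.

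\textbf{From sequential to fragmented time.} Fix an arbitrary constant $\eta>0$ and pick $T_0=cn\ln n$ with $c$ large enough that both (i) stabilisation occurs within $T_0$ interactions whp and (ii) $T_0>\tfrac{110}{35}\eta n\ln n$. Applying Lemma~\ref{YouOnlyTakeOnceLemma} to the interval formed by the first $T_0$ interactions yields fragmented time $\Theta(T_0/n)=O(\log n)$ whp; monotonicity of fragmented time in the interval then gives the same bound for the fragmented time to stabilise.

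The main technical obstacle is the whp concentration of Phase~1, whose duration is dominated by the geometric waiting time for the final reduction ($k=1$, parameter $1/n$, mean $n$, standard deviation $\Theta(n)$); however, classical coupon-collector tail bounds applied to the full sum supply the required $O(n\log n)$ whp estimate, so no difficulty arises beyond invoking this standard tool.
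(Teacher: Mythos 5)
Your proposal is correct and follows essentially the same route as the paper: the same two-phase decomposition, the same sums of geometric waiting times giving $O(n\log n)$ interactions whp (the paper cites the Janson bound where you cite coupon-collector tails), and the same reliance on Lemma~\ref{YouOnlyTakeOnceLemma} to pass to fragmented time. The only (harmless) difference is that you verify the ``external responder at most once'' condition globally and apply Lemma~\ref{YouOnlyTakeOnceLemma} once to the whole run, whereas the paper treats Phase~1 separately by observing that all its interactions are internal and invokes the lemma only for Phase~2.
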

\begin{proof}
As described above, the execution of the leader election protocol can be divided into two consecutive phases. In the first phase, driven by rules (1) and (2), a unique agent in the leader state $L^*$ is computed. In the second phase, governed by rules (3) and (4), the states of all followers are changed to $F^*$. We prove that the fragmented time of each phase is $O(\log n)$.

In the first phase, the number of interactions $T$, in which a single leader is computed, can be expressed as $T = T_n + T_{n-1} + \cdots + T_1$.

In this formula, $T_i$ for $i>1$ is the number of expected interactions to eliminate a leader candidate when $i$ candidates are still in contention, and $T_1$ is the number of interactions in which a single $L$ transitions to $L^*$. Note that $T_i$ has geometric distributions, and $\mathbb{E}[T_i] = n/i$. So, by Janson's bound~\cite{geometric}, this number of interactions is $O(n\log n)$ with high probability. Since, in the first phase, all interactions are internal, the fragmented parallel time is $O(\log n)$.

In the second phase, the number of interactions $T$ can be expressed as the sum $T = T_1 + T_2 + \cdots + T_{n-1}$.

In this formula, $T_i$ stands for the number of interactions required to enlarge $\mathcal{G}_0$ with $i$ agents by one agent. Note that $T_i$ has geometric distributions, and $\mathbb{E}[T_i] = n/i$. So, by Janson's bound, this number of interactions is $O(n\log n)$ with high probability. In the second phase, each external interaction targets $\mathcal{G}_1$ and reduces this group by 1. Therefore, by Lemma \ref{YouOnlyTakeOnceLemma}, the fragmented time is also $O(\log n)$.

\end{proof}

\subsection{Majority Computation}\label{Majority}

The state space of the majority protocol is $S=\{G,G^*,R,R^*,N\}$,
and
in the initial
configuration each agent is either in state $G$ or $R$.  
The main goal is
to decide which subpopulation of agents in state $G$ or $R$ is greater than the other. If the subpopulation in state $G$ is greater, all agents are expected to stabilise in state $G^*.$ Otherwise, they must stabilise in state $R^*.$ The majority protocol {\bf M-protocol} described below uses also neutral state $N$.
The state space is partitioned into three groups
$\mathcal{G}_R=\{R,R^*\}$, $\mathcal{G}_N=\{N\},$ and $\mathcal{G}_G=\{G,G^*\}.$

{\bf M-protocol} has the following transition rules: \nopagebreak
\begin{small}
\begin{tcolorbox}
\begin{multicols}{2}
\noindent
\begin{alignat*}{5}
\transition{1}{R}{\mathcal{G}_G|G}{N}{N}
\transitionnull{2}{R}{\mathcal{G}_G|null}{R^*}{}
\transition{3}{G}{\mathcal{G_R}|R}{N}{N}
\end{alignat*}
\begin{alignat*}{5}
\transitionnull{4}{G}{\mathcal{G_R}|null}{G^*}{}
\transition{5}{R^*}{\mathcal{G_N}|N}{R^*}{R^*}
\transition{6}{G^*}{\mathcal{G_N}|N}{G^*}{G^*}
\end{alignat*}
\end{multicols}
\vspace{-3em}
\end{tcolorbox}
\end{small}

Transition rules (1) and (3) instruct agents in states $R$ and $G$ to become neutral for as long as pairs $R+G$ and $G+R$ can be formed. As soon as one of these states is no longer present in the population either rule (2) or (4) is used to change state $R$ to $R^*$ or $G$ to $G^*,$ respectively. In addition, either rule (5) or (6) is used to change neutral state to $R^*$ or $G^*,$ respectively.
Alternatively, if all states $G$ and $R$ disappear after application of rules (1) and (3), the population stabilises in the neutral state 
$N.$
\begin{lemma}\label{MP}
The fragmented time of M-protocol is
$O(\log n)$ whp.
\end{lemma}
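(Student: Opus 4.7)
The plan is to decompose the execution into two phases and bound the fragmented time of each via Lemma \ref{YouOnlyTakeOnceLemma}, following the template of Lemma \ref{LE-lemma}. Phase 1, the \emph{cancellation phase}, runs from the start until the first moment when $\min(r,g)=0$, where $r$ and $g$ denote the current counts of agents in states $R$ and $G$. Throughout Phase 1 no $R^*$ or $G^*$ has been created and both $\mathcal{G}_R$ and $\mathcal{G}_G$ are non-empty, so only rules (1) and (3) can fire; each such firing converts one $R$ and one $G$ into two $N$s. Phase 2, the \emph{consolidation phase}, runs from the end of Phase 1 to stabilisation and splits into three subcases: tie ($r_0=g_0$), $G$ majority and $R$ majority.

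For Phase 1 I would bound $|I_1|=O(n\log n)$ whp by observing that in state $(r,g)$ a single interaction fires rule (1) or (3) with probability $(r+g)/n$. The waiting times between successive firings are geometric, and summing over the $\min(r_0,g_0)$ firings with Janson's bound for sums of geometric random variables gives the claimed bound. Every responder in Phase 1 transitions from $R$ or $G$ to $N$, so each agent is responder in at most one external Phase 1 interaction, and Lemma \ref{YouOnlyTakeOnceLemma} then yields fragmented parallel time $O(\log n)$ whp.

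For Phase 2 the tie case is immediate: all agents are in $N$, and since no rule admits $N$ as initiator the configuration is already stable. In the majority case (WLOG $G$) Phase 2 begins with $r=r^*=0$, some $g\ge 1$ agents in $G$, and the remainder in $N$. Only rules (4) and (6) can fire now: rule (4) converts each $G$ initiator into $G^*$ via the null test on the empty group $\mathcal{G}_R$, while rule (6) converts an $N$ responder into $G^*$ whenever a $G^*$ initiator is selected. A straightforward Janson-bound analysis, applied first to the rule-(4) firings (rate $g/n$, shrinking $g$ to $0$) and then to the rule-(6) firings starting from $g=0$ (rate $(n-m)/n$, shrinking $m$ to $0$), yields $|I_2|=O(n\log n)$ whp. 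Rule (4) is a null test with no responder, and rule (6) removes its responder from $\mathcal{G}_N$, so each agent is again responder in at most one external Phase 2 interaction, and Lemma \ref{YouOnlyTakeOnceLemma} gives fragmented parallel time $O(\log n)$ whp. The $R$-majority case is symmetric, and summing the two phases yields the stated $O(\log n)$ total.

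The main obstacle I expect is the interleaving of rules (4) and (6) in the majority case: the rate of rule (6) grows as rule (4) fires, so the two geometric sums are not independent and some care (for example, a coupling that lower-bounds the rule-(6) rate by the contribution of the rule-(4) firings alone) is needed to bound $|I_2|$ cleanly. A secondary technical point is that the hypothesis $|I|>\frac{110}{35}\eta n\ln n$ of Lemma \ref{YouOnlyTakeOnceLemma} may fail for a short phase, but in that event the direct observation that every agent is responder at most once inside the phase means a single chunk suffices and the fragmented time is $\ln n$ by definition.
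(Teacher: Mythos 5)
Your proposal is correct and follows essentially the same route as the paper's proof: the same two-phase decomposition, geometric waiting times summed via the Janson bound, and an appeal to Lemma~\ref{YouOnlyTakeOnceLemma} via the observation that every external responder is immediately moved out of its source group. The interleaving of rules (4) and (6) that you flag as the main obstacle is handled in the paper exactly as in your own decomposition --- the rule-(6) epidemic is counted only from the moment all rule-(4) conversions have completed --- so no additional coupling is needed.
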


\begin{proof}
The stabilisation time of the majority protocol is divided into two phases.
In the first phase at least one of the states $G$ or $R$
gets expunged.
In case of imbalance between $G$ and $R$ states we also have the second phase in which all agents either in states $N$ and $G$ or $N$ and $R$ change into $G^*$ or $R^*,$ respectively.

In the first phase the number of interactions is $T=T_m+T_{m-2}+T_{m-4}+\cdots +T_{k}$, 
where $m$ is the initial number $G$ and $R$ agents and $k$ is the bias between the colors.
Each random variable $T_i$ is the time needed to neutralise the next pair $G,R$, when still $i$ pairs of agents $G,R$ can be formed.
Note that $T_i$ has geometric distribution and $\mathbb{E}T_i=n/i$.
So by Janson bound $T$ is $O(n\log n)$ whp.
Since in any external interaction the responder gets
to $\mathcal{G}_N$, any agent can be the responder at most once in an external interaction.
By Lemma \ref{YouOnlyTakeOnceLemma} the parallel time of the first phase is $O(\log n)$.

In the second phase the number of interactions
needed to make all transitions $G\rightarrow G^*$ or $R\rightarrow R^*$ is $T=T_k+T_{k-1}+T_{k-2}+\cdots +T_1$,
where $T_i$ is the number of interactions to reduce the number of
$G$ or $R$ agents from $i$ to $i-1$.
Note that $T_i$ has geometric distribution and $\mathbb{E}T_i=n/i$.
Thus, by Janson bound $T$ is $O(n\log n)$ whp.
Then we count the number of interactions $T'$ needed to
turn all remaining agents to state $G^*$ or $R^*$
after initial $T$ interactions of the second phase.
We get
$$T'=T_c'+T_{c+1}'+T_{c+2}'+\cdots +T_{n-1}',
$$
where $c$ is the number of $G^*$ or $R^*$ agents after $T$ interactions of the second phase.
Note that $T_i'$ has geometric distribution and $\mathbb{E}T_i'=n/i$.
So by Janson bound $T$ is $O(n\log n)$ whp.
Thus the second phase has $O(n\log n)$ interactions whp.
Any agent can be a responder at most once in an external interaction of phase 2.
By Lemma \ref{YouOnlyTakeOnceLemma} the parallel time of the second phase is $O(\log n)$.    
\end{proof}

\section{Computing the Median}\label{s3}

In this section, we consider computing the median of $n$ distinct keys, each of which is held by one of the $n$ agents.
For agents $a,b$ belonging to the set $S$ of agents, the relation $b<a$ denotes $\mbox{key}(a)<\mbox{key}(b)$.
We adopt here a \emph{comparison model} in which the transition function depends not only on the states of the agents, but also on the order of their keys.
The keys are hidden and there is no other way to access them.
The number of states remains fixed.
A similar limited use of large keys can be found in {\em community protocols} in \cite{DBLP:conf/icalp/GuerraouiR09} to handle Byzantine failures.

For any agent $c \in S,$ let $\mathbb{A}_c$ and $\mathbb{B}_c$ be the set of 
all agents above and below $c$ respectively.
The agent $m$ is the unique {\em median} if $|\mathbb{B}_m|-|\mathbb{A}_m|=0,$ for odd $n,$ 
or one of the two medians if $||\mathbb{B}_m|-|\mathbb{A}_m||=1,$ for even $n$.
In this version we assume that all keys are different and $n$ is odd.
The arbitrary case
requires minor amendments, as the answer may refer to two agents. 
Before we consider selective protocols, we first consider median computation
in comparison model with a standard random scheduler.

\subsection{Median Computation in Standard Model}

\begin{theorem}\label{LBmedian}
Finding the median in the comparison model requires $\Omega(n)$ time in expectation.
\end{theorem}

\begin{proof}
Assume that $n$ is odd and agents $a_1$ and $a_2$ share between themselves the median and the key immediately succeeding it (in the total order of keys).
One can observe that before the first interaction between $a_1$ and $a_2$,
all consecutive configurations of states of all agents are independent 
from whether $a_1$ or $a_2$ is associated with the median.
Thus before the first interaction between $a_1$ and $a_2$ 
no algorithm can declare either of them as the median.
And since the expected number of interactions preceding 
the first interaction between $a_1$ and $a_2$ is $\Omega(n^2),$
the thesis of the theorem follows.
\end{proof}

Now we formulate an almost optimal median population protocol in the adopted model.
All~agents start this protocol in neutral state $N.$
In due course, agents change their states,
s.t., eventually all agents associated with keys smaller than the median 
end up in state $B,$ those with keys greater than the median in state $A$,
and the median conclude in state $N.$ 
The median protocol uses the following symmetric transition function:
\setlength\columnsep{24pt}
\begin{small}
\begin{tcolorbox}[left=0pt]
\begin{multicols}{2}
\noindent
\begin{alignat*}{5}
\transitionrelComment{1}{N}{N}{B}{A}{\text{initialisation}}
\transitionrelComment{2}{A}{B}{B}{A}{\text{fix order}}
\end{alignat*}
\begin{alignat*}{5}
\transitionrelComment{3}{A}{N}{N}{A}{\text{fix order}}
\transitionrelComment{4}{N}{B}{B}{N}{\text{fix order}}
\end{alignat*}
\end{multicols}
\vspace{-3em}
\end{tcolorbox}
\end{small}
Note that there is always the same number of agents in states $B$ and $A$
and one agent will remain in state $N$ as $n$ is an odd number.


\begin{theorem}
    The median protocol operates in $O(n\log n)$ time both in expectation and whp.
\end{theorem}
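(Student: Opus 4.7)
The plan is to split the run into a \emph{labelling phase}, during which rule (1) fires the $(n-1)/2$ times needed to reduce the number of $N$-labelled agents from $n$ down to one, and a \emph{sorting phase}, during which rules (2)--(4) clear the remaining inversions. Let $k(t)$ denote the current number of agents in state $N$, and call a pair of agents $\{a,b\}$ with $\mathrm{key}(a)<\mathrm{key}(b)$ \emph{inverted} if their current labels violate the total order $L<N<U$; let $I(t)$ denote the number of inverted pairs. The protocol stabilises exactly when $k=1$ and $I=0$.

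For the labelling phase, I would first observe that rules (2)--(4) do not change $k$, so as long as $k$ $N$-agents remain, the next interaction fires rule (1) with probability $k(k-1)/(n(n-1))$. The time $X_k$ spent at level $k$ is therefore geometric with mean at most $n^2/(k(k-1))$, and writing $T_1$ for the time of the last rule-(1) firing, summing over $k\in\{n,n-2,\ldots,3\}$ yields $E[T_1]=O(n^2\sum_{k\ge 3}1/k^2)=O(n^2)$. A Janson-type tail bound on sums of independent geometric variables, as used elsewhere in the paper, upgrades this to $T_1=O(n^2\log n)$ whp.

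I would then verify the key monotonicity property: every firing of rule (2), (3) or (4) strictly decreases $I$ by at least one and never creates a new inversion. The inversion between the two swapped positions is obviously destroyed; what needs checking is the contribution of any third agent $m$, and I would enumerate the three possible relative positions of $m$ (below, between, or above the swapped pair) together with the three possible labels of $m$, verifying in each of the nine cases that the inversion contribution of $m$ does not grow. After time $T_1$ only one $N$ remains, so rule (1) is permanently disabled and $I$ is strictly monotone decreasing thereafter.

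Finally, starting from $I^*\le\binom{n}{2}=O(n^2)$ at time $T_1$, whenever $I$ inversions are present the scheduler selects a specific inverted pair (in either of the two orders) with probability $2/(n(n-1))$, so $I$ drops by at least one on the next interaction with probability at least $2I/(n(n-1))$. The waiting time $Y_I$ to shrink $I$ by one is thus geometric with mean at most $n^2/(2I)$, and $\sum_{I=1}^{I^*}E[Y_I]=O(n^2\log n)$; the same Janson-type bound again gives $O(n^2\log n)$ whp. Combining the two phases produces a total sequential time $O(n^2\log n)$ both in expectation and whp, equivalently $O(n\log n)$ parallel time, as claimed. The step I expect to be the main technical obstacle is the inversion-decreasing invariant for rules (2)--(4); the rest reduces to standard waiting-time calculations.
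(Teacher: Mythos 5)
Your proposal is correct and follows essentially the same route as the paper: the heart of both arguments is the monotonicity lemma that a meaningful interaction never increases the number of badly-ordered pairs, followed by a coupon-collector sum of geometric waiting times with means $\binom{n}{2}/i$ and a Janson-type (or potential-decay) tail bound. The only real difference is cosmetic: the paper counts $N$--$N$ pairs as ``disordered'' as well, so a single potential $d(C)$ starting at $\binom{n}{2}$ already covers rule (1), making your separate labelling phase unnecessary.
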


\begin{proof}
    We say that a pair of agents $a$ and $b$ is {\em disordered} if an interaction 
    between $a$ and $b$ is meaningful.
    We define the {\em disorder} $d(C)$ of a configuration $C$ 
    as the total number of disordered pairs of agents in this configuration.
    Since all agents start in state $N,$ any initial interaction is meaningful, via application of rule (1). 
    And in turn the disorder of the initial configuration is $d(C_0)=\binom{n}{2}$.
    In the final configuration $C_{\infty},$ when the agent with the median key is in state $N,$
    and all agents with smaller and larger (than the median) keys are in states $B$ and $A$ respectively, 
    the disorder $d(C_{\infty})=0$, as none of the rules can be applied.
    
    \begin{proposition}\label{dec_d}
    Any meaningful interaction reduces the disorder of a configuration. 
    \end{proposition}
    
\begin{proof}
Let us order the agents according to the values of their keys: $a_1,a_2,\ldots,a_n$.
    Consider a meaningful interaction between $a_i$ and $a_j,$ 
    where $i<j,$
    which changes the configuration from $C$ to $C'$.
    We observe that the pair $a_i,a_j$ is no longer disordered. 
    We also show that for any other agent $a_k,$ the number of 
    disordered pairs of the form $a_k,a_i$ and $a_k,a_j$ does not increase.
    There are two types of meaningful interactions between $a_i$ and $a_j$.
    The first type refers to rule $(1)$ and the second swaps the states
    of $a_i$ and $a_j$ by either of the rules $(2),(3)$ or $(4)$.
    
    Assume first that $k<i$. 
    If the interaction between $a_i$ and $a_j$ is of the second type, the
    level of disorder in pairs $a_k,a_i$ and $a_k,a_j$ remains the same.
    Consider now the interaction between of the first type.
    If the state $s(a_k)$ of agent $a_k$ is $B$, then both pairs $a_k,a_i$ and $a_k,a_j$ are not disordered in $C'$.
    If $s(a_k)=A$ or $N$, then both pairs $a_k,a_i$ and $a_k,a_j$ are disordered in $C$.
    Thus in both sub-cases the disorder in pairs $a_k,a_i$ and $a_k,a_j$ 
    cannot increase.
    In the second case $k>j,$ and the reasoning is analogous, as this case is symmetric to the first one.
    
    The remaining case is $i<k<j$.
    If the interaction between $a_i$ and $a_j$ is of the first type
    or swaps states $B$ and $A$, then there is no disordered pair of the form $a_k,a_i$ or $a_k,a_j$ in $C'$.
    So it remains to consider interactions between $a_i$ and $a_j$
    that swap the states $N$ with $B$ or $A$.  
    Without loss of generality, assume that the swap is between $N$ and $B$.
    If $s(a_k)=B,N$, then neither of the pairs $a_k,a_i$ or $a_k,a_j$ is disordered in $C'$.
    Finally, if $s(a_k)=A$, then only pair $a_k,a_j$ is disordered both is $C$ and $C'$.
\end{proof}

The probability of making a meaningful interaction in a configuration $C,$ s.t., $d(C)=i$ is $p_i=i/\binom{n}{2}$. Let the random variable $T_i$ be the number of interactions needed to observe a meaningful interaction for a configuration $C$ when $d(C)=i$. We have $\mathbb{E}[T_i]={\binom{n}{2}}/i$. The expected number of interactions $\mathbb{E}[T]$ to transition from $C_0$ to $C_{\infty}$ is
\[ \mathbb{E}[T] \le \mathbb{E}[T_1] + \mathbb{E}[T_2] + \cdots + \mathbb{E}\left[T_{\binom{n}{2}}\right] = \binom{n}{2}H_{\binom{n}{2}} = O(n^2\log n). \]

One can also prove that $T=O(n^2\log n)$ whp applying Janson's bound. However, we show here an alternative proof by a potential function argument.

In particular, we show that for two subsequent configurations $C$ and $C'$ we have
\[ \mathbb{E}[d(C')] \le \left(1-\frac{1}{\binom{n}{2}}\right)d(C). \]
This inequality becomes trivial when $d(C)=0.$ Otherwise, when $d(C)\neq 0$,
\[ \mathbb{E}[d(C')] \le \left(1-\frac{d(C)}{\binom{n}{2}}\right)d(C) + \frac{d(C)}{\binom{n}{2}}\left(d(C)-1\right) = \left(1-\frac{1}{\binom{n}{2}}\right)d(C). \]
After $t$ interactions beyond configuration $C_0$ we get
\[ \mathbb{E}[d(C_t)] \le \left(1-\frac{1}{\binom{n}{2}}\right)^t d(C_0) \le \exp\left(-\frac{t}{\binom{n}{2}}\right)\binom{n}{2}. \]
We obtain $\mathbb{E}[d(C_t)]<n^{-\eta}$ for
\[ t \ge \binom{n}{2}\left(\ln \binom{n}{2}+\ln\eta\right). \]
And finally, when $\mathbb{E}[d(C_t)]<n^{-\eta}$, by Markov's inequality, we get $\Pr(d(C_t)\ge 1)<n^{-\eta}$. This is equivalent to $\Pr(C_t\neq C_{\infty})<n^{-\eta}$.

\end{proof}
A similar reasoning can be found in the context of probabilistic sorting, see~\cite{Murty, Young}.

\subsection{Fast Median Computation}
 
We present and analyse here efficient median computation in selective population protocols.
The proposed solution is done by breaking up the full protocol into smaller blocks implemented as independent stabilisation processes with clearly defined inputs 
and outputs, 
as well as efficient and stable solutions. Each of these independent processes (see Fast-median algorithm below) including leader (pivot) election, partitioning agents wrt the key of the pivot, and majority computation, is executed on distinct partitions of states.
Recall from Section~\ref{SM} that the leader elected in the beginning of the computation process executes the code of the solution embedded in the transition function, and manages all input/output operations.\\

\begin{algorithm}[ht!]\small 
\begin{algorithmic}[1]
\Require{$S$ -- all agents set, $C=S$ -- median candidate set}
\State Select randomly leader (as pivot) $p\in S$ \Comment{leader election}
\Repeat
\State Partition $S$ to $\mathbb{B}_{p}=\{x\in S:x<p\},\mathbb{A}_{p}=\{x\in S:x>p\},\{p\};$  

\State \textbf{switch} \Comment{majority computation}
\Nest
\State \textbf{case} $(|\mathbb{B}_p|>|\mathbb{A}_p|)\longrightarrow C=C\cap \mathbb{B}_p$
\State \textbf{case} $(|\mathbb{B}_p|<|\mathbb{A}_p|)\longrightarrow C=C\cap \mathbb{A}_p$
\State \textbf{case} $(|\mathbb{B}_p|=|\mathbb{A}_p|)\longrightarrow C=\{p\}$
\EndNest
\State $p\leftarrow$ randomly chosen (by current $p$) agent in $C$ \Comment{leader hand over}
\Until{ $(|\mathbb{B}_p|=|\mathbb{A}_p|)$}
\State \Return $p$ \Comment{result announcement}
\end{algorithmic}
\caption{Fast-median.}\label{alg:semisplit-mcts}
\end{algorithm}

Recall that leader election and majority computation, were discussed in Sections~\ref{LE} and \ref{Majority}, respectively. 
Thus the focus in this section is on efficient partitioning of all agents in $S$ to $\mathbb{B}_{p}=\{x\in S:x<p\},\mathbb{A}_{p}=\{x\in S:x>p\}$, and $\{p\}.$
Note that such partitioning is not trivial as due to the restrictions in the model the pivot $p$ cannot distribute the value of its key to all agents in the population. Instead, the agents gradually learn their relationship with respect to the pivot by comparing their keys with other agents.

\begin{theorem}\label{fast-median}
    Algorithms Fast-median operates in fragmented parallel time $O(\log^4 n)$ whp.  
\end{theorem}

\begin{proof}
    The proof can be found at the end of Section~\ref{median:coloring}.
\end{proof}
\subsubsection{Partitioning via Coloring}\label{median:coloring}
We commence by providing an overview of the argument.
The partitioning of all agents occurs in multiple phases, represented by consecutive stabilization processes. The objective in each phase is to correctly partition a constant fraction of agents that have not been partitioned yet. Each phase has a time complexity of $O(\log^2 n)$. Since partitioning requires $O(\log n)$ phases, the overall time complexity becomes $O(\log^3 n)$.

In the median protocol, we execute $O(\log n)$ partitioning steps, which constitute the majority of the computation time. Consequently, the total time complexity for computing the median is $O(\log^4 n)$.
To analyze a single phase of partitioning, we categorize the set of uncolored agents above the pivot into $2\log n$ buckets, and similarly for those below the pivot. We then demonstrate that within $O(\log^2 n)$ time, the algorithm successfully colors $\log n$ agents from the first bucket.
In each successive time period indexed by $i = 2, 3, \ldots$, with a duration of $O(\log n)$, the algorithm colors $2^{i-1}\log n$ agents in the $i$-th bucket.
Consequently, after $O(\log^2 n)$ time from the initiation of a phase, a constant fraction of uncolored agents acquires colors.

The input for the partitioning process consists of the leader agent in the pivot state $P$ and all other agents in the state $N_{in}$.
We utilize two groups of states:  $\mathcal{G}=\{P,B_0,\ldots,B_{21},A_0,\ldots,A_{21},N\}$ and $\mathcal{G}_{in}=\{N_{in}\}$. We interpret states $A_t$, $B_t$, and $N$ as above, below, and neutral colors, respectively. An agent adopts state $A_t$ ($B_t$) as soon as it learns that its key is above (below) the key of the pivot.

During each phase, we color approximately a fraction of $1/22$ of yet uncolored agents. Upon the conclusion of the phase, these agents are moved to a different group, i.e., they do not participate in the partitioning of uncolored agents in the remaining phases.

In order to limit the activity of colored agents and, in turn, the duration of each phase, we introduce the concept of \emph{tickets}.
While the pivot has an unlimited number of tickets, any newly colored agent receives a fixed pool of 21 tickets.
For as long as any colored agent has tickets, it targets agents in group $\mathcal{G}_{in}$ trying to color them.
During such an interaction, a colored agent loses one ticket and moves one agent from $\mathcal{G}_{in}$ to $\mathcal{G}$.
Once a colored agent loses all its tickets, it starts targeting group $\mathcal{G}$.
The (partial) coloring phase concludes when the group $\mathcal{G}_{in}$ becomes empty.
The set of relevant rules is given below.
\setlength\columnsep{14pt}

\begin{small}
\begin{tcolorbox}[left=-2pt]
\begin{multicols}{2}
\noindent
\begin{alignat*}{5}\label{Coloring_Phase_Protocol}
\transitionrel{1}{P}{\mathcal{G}_{in}|N_{in}}{P}{A_{21}}
\transitionrel[>]{1}{P}{\mathcal{G}_{in}|N_{in}}{P}{B_{21}}
\transitionrel[<]{2}{B_{t>0}}{\mathcal{G}_{in}|N_{in}}{B_{t-1}}{N}
\transitionrel[>]{2}{B_{t>0}}{\mathcal{G}_{in}|N_{in}}{B_{t-1}}{B_{21}}
\transitionrel[>]{2}{B_{0}}{\mathcal{G}|N}{B_0}{B_{21}}
\transitionrel[>]{3}{A_{t > 0}}{\mathcal{G}_{in}|N_{in}}{A_{t - 1}}{N}
\end{alignat*}
\begin{alignat*}{5}
\transitionrel[<]{3}{A_{t >0}}{\mathcal{G}_{in}|N_{in}}{A_{t-1}}{A_{21}}
\transitionrel[<]{3}{A_{0}}{\mathcal{G}|N}{A_0}{A_{21}}
\transitionrel[<]{4}{N}{\mathcal{G}|P}{B_{21}}{P}
\transitionrel[>]{4}{N}{\mathcal{G}|P}{A_{21}}{P}
\transitionrel[<]{4}{N}{\mathcal{G}|B_t}{B_{21}}{B_t}
\transitionrel[>]{4}{N}{\mathcal{G}|A_t}{A_{21}}{A_t}
\end{alignat*}
\end{multicols}
\vspace{-3em}
\end{tcolorbox}
\end{small}

We formulate here a tail bound that works for hypergeometric sequences for the case of small fraction $p$ of black balls.
We need this more sensitive bound in some of our proofs.

    \begin{lemma}\label{hyper}
    Assume we have an urn with $n$ balls where $pn$ of them are black. Let $X_i$ be a binary random variable equal to one iff in the $i$th draw without replacement the drawn ball was black, and let $X = \sum_{i=1}^{\kappa} X_i$. Then, for ${\kappa} \leq n$ and $0<\delta < 1,$ we get
    $$ \Pr\left(  X < (1-\delta)p{\kappa}-p  \right) < {\kappa}\exp\left(\frac{-\delta ^2p{\kappa}}{2}\right). $$
    \end{lemma}

    \begin{proof}
    If, after drawing $i-1$ balls, no more than $(i-1)p$ of them are black, we have $q_i = \Pr(X_i=1|\text{current urn content}) \ge p$.

    Now, let's define the random variables $Y_i$ as follows:
    \begin{itemize}
        \item If, after drawing $i-1$ balls, more than $(i-1)p$ of them are black, we independently draw $Y_i=1$ with a probability of $p$, and $Y_i=0$ otherwise.
    
        \item If, after drawing $i-1$ balls, at most $(i-1)p$ of them are black and $X_i=0$, then $Y_i=0$.
    
        \item If, after drawing $i-1$ balls, at most $(i-1)p$ of them are black and $X_i=1$, then $Y_i=1$ with a probability of $p/q_i$, and $Y_i=0$ otherwise.
    \end{itemize}
    
        Variables $Y_1,Y_2,\ldots,Y_{\kappa}$ are independent,
        because no $Y_i$ depends on any combination of $Y_j:j<i$, and $\Pr(Y_i)=p$.
        If $X < (1-\delta)p{\kappa}$, then let $j\in[0,{\kappa})$ be the last draw before which not less than $(j-1)p$ balls drawn are black.
        This implies $Y_i\le X_i$ for $i>j$ and $X_1+\cdots+X_j\ge p(j-1)$.
        Thus when $X_1+\cdots +X_{\kappa}<(1-\delta) p{\kappa}-p$, we get 
        \begin{align*}
          X_{j+1}+\cdots+X_n & = 
          (X_1+\cdots +X_{\kappa})-(X_1+\cdots +X_j)\\
          & <(1-\delta)p{\kappa}-pj=\left(1-\delta\frac{{\kappa}}{{\kappa}-j}\right)p({\kappa}-j). 
        \end{align*}
        Also $Y_{j+1}+\cdots+Y_{\kappa}\le X_{j+1}+\cdots+X_{\kappa}<(1-\delta) p{\kappa}-pj$.
        In other words, we can say that the existence of such $j$ that $Y_{j+1}+\cdots+Y_{\kappa}<(1-\delta) p{\kappa}-pj$ is
        a necessary condition for $X < (1-\delta)p{\kappa}-p$.
        Now we upper bound the probability of existence of such $j$ where $Y_{j+1}+\cdots+Y_{\kappa}<(1-\delta) p{\kappa}-pj$.
        For a fixed $j$ by Chernoff inequality we get
        \begin{align*}
        &\Pr\left(Y_{j+1}+\cdots +Y_n<\left(1-\delta\frac{{\kappa}}{{\kappa}-j}\right)p({\kappa}-j)\right)\\
        &
            <\exp\left(-\frac{{\kappa}^2\delta^2p({\kappa}-j)}{2({\kappa}-j)^2}\right)\le
            \exp\left(\frac{-\delta ^2p{\kappa}}{2}\right).
        \end{align*}
        By the union bound on $j=0,1,\ldots,{\kappa}-1$ we get the thesis of the lemma.
    \end{proof}

    Denote the number of agents participating (not previously colored) in the phase by~$m$.
    For any interaction $t,$ let $k(t)$ be a number of agents in group $\mathcal{G}$ in $t$
    and $Inf_x(t)$ be a number of informed agents which are in  states $A$ or $B$ in bucket $x$.
    The sequence of useful technical lemmas leading to the thesis of  Theorem~\ref{coloring_theorem} follows.

    \begin{lemma}\label{rlog2n} \label{claim}
    If during interaction $t$ the number of colored agents is $r\ge\log^2 n$, then $k(t+500n)\ge\min\{20r,m\}$ whp.
    \end{lemma}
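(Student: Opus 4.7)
The plan is to follow the same schema as the paper's earlier Janson-bound arguments (e.g., the proofs of Lemmas~\ref{LE-lemma} and \ref{MP}). For each integer $j$ with $r \le j < \min\{20r, m\}$, let $T_j$ denote the number of interactions spent in configurations where $k = j$. I will show that $T_j$ is stochastically dominated by a geometric random variable with success probability $\Omega(j/n)$, so that $ET_j = O(n/j)$ and therefore $E\!\left(\sum_j T_j\right) = O(n \log 20) = O(n)$. Since $r \ge \log^2 n$, a Chernoff/Janson-type tail bound on a sum of geometrics — following the template of Lemma~\ref{MP} — yields $\sum_j T_j \le 500 n$ with failure probability $n^{-\eta}$, which is the claim.

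The heart of the argument is to lower-bound the per-interaction probability that $k$ increases whenever $r \le k = j < \min\{20r, m\}$. I split this into two cases. If $\mathcal{G}_0 \neq \emptyset$, then a colored initiator with tickets remaining draws a uniform responder from $\{N_0\}$, and by Lemma~\ref{hyper} applied to the remaining ``should-be-red'' and ``should-be-green'' sub-populations of $\mathcal{G}_0$, the responder's hidden key matches the initiator's color with probability bounded below by a constant; the resulting successful coloring uses rule~(2) or~(3) with the ``$>$'' branch. If $\mathcal{G}_0 = \emptyset$, then $k + |N| = m$, and whenever $|N| > 0$ any zero-ticket colored initiator colors an $N$-agent deterministically (via the $T = 0$ branches of rules~(2)--(3)); if also $|N| = 0$ then $k = m$ and the claim is vacuous. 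In either non-terminal case the overall success probability per interaction is $\Omega(j/n)$, as needed.

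The main obstacle is justifying that a constant fraction of colored agents still has usable tickets throughout the window, so that the first case above is actually driven by $\Omega(j)$ initiators. The naive global bound of at most $21 \cdot k_{\max} \le 420 r$ ticket-losses is not by itself strong enough, since it permits all $\le 20r$ colored agents to become zero-ticket. I would handle this by tracking red and green tickets separately: if the red ticket pool is heavily depleted, this is only possible because many ``should-be-green'' $N_0$ agents have been neutralised by red initiators, in which case further coloring continues through zero-ticket red initiators acting on $\{N\}$ (the second case above), while the still-ticketed green initiators simultaneously keep coloring out of $\mathcal{G}_0$ (the first case). A careful case analysis of this trade-off, using Lemma~\ref{hyper} to quantify the without-replacement sampling of hidden keys from $\mathcal{G}_0$ and using $r \ge \log^2 n$ to convert the hypergeometric deviation into a $n^{-\eta}$ failure bound via a union bound over the $O(r)$ stages $j$, is the principal technical hurdle.
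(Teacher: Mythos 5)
Your proposal has a genuine gap, and it stems from working with the wrong quantity. In the paper, $k(t)$ is the size of $\mathcal{G}_1$, which contains the neutral agents $N$ as well as the colored ones; consequently \emph{every} interaction in which a ticketed colored agent is the initiator while $\mathcal{G}_0\neq\emptyset$ increments $k$ by one, whether the responder gets colored or merely neutralised. Your argument instead tries to lower-bound the probability of a \emph{successful coloring}, claiming via Lemma~\ref{hyper} that a uniform responder from $\mathcal{G}_0$ matches a given initiator's color with constant probability. That claim is false in general: a red initiator can only color responders with keys below its own, so an initiator whose key has low rank among the uncolored agents can color almost nobody. Lemma~\ref{hyper} is a concentration bound for hypergeometric sampling and says nothing about this fraction; controlling how colored agents spread across the key order is precisely what the much heavier bucket induction of Lemma~\ref{doubleInductiveColoringLemma} is for, and it is not available inside this lemma.

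The second gap is your treatment of ticket depletion, which you correctly identify as the main obstacle but resolve incorrectly. Depletion of the red ticket pool does not imply that many ``should-be-green'' agents were neutralised (a red agent loses a ticket to any $N_0$ with a larger key, including should-be-red ones), and having zero-ticket red agents recolor agents already in $\{N\}\subseteq\mathcal{G}_1$ does not increase $k$ at all. The paper's resolution is a simple dichotomy that makes the whole difficulty disappear: either at least $r/21$ agents currently hold tickets, in which case each interaction has probability at least $r/(21n)$ of being initiated by one of them (and hence of incrementing $k$), or fewer than $r/21$ do, in which case at least $20r/21$ colored agents have already spent all $21$ of their tickets, and each spent ticket corresponds to an agent already moved irreversibly into $\mathcal{G}_1$, so $k\ge 20r$ holds already. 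A single Chernoff bound on $500n$ coupled Bernoulli trials with mean $500r/21>20r$, using $r\ge\log^2 n$, then finishes the proof. Your geometric-sum skeleton could be made to work on top of this dichotomy, but as written the two steps that would have to drive it are unsupported.
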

    
    \begin{proof}
            Any colored agent has $21$ tickets to utilise.
        As long as in interaction $i$ there are at least $r/21$ agents possessing tickets, the probability $q_i$ of choosing one of them to be the initiator and distributing a ticket is
        at least $\frac{r}{21n}$.
        Note that when less than $r/21$ agents have tickets, then already at least $21\cdot 20r/21=20r$ tickets are utilised.
        We define a random variable $X_i\in\{0,1\}$, s.t., $X_i=1$ iff a ticket is utilised in interaction $i$.
        Let $Y_i$ be a random variable defined as follows:
        \begin{itemize}
            \item if $q_i\ge\frac{r}{21n}$ and $X_i=0$, then $Y_i=0$,
            \item if $q_i\ge\frac{r}{21n}$ and $X_i=1$, then $Y_i=1$ with probability $\frac{r}{21 q_i n},$ and $Y_i=0$ otherwise,
            \item if $q_i<\frac{r}{21n}$, then $Y_i=1$ with probability $\frac{r}{21n},$   and $Y_i=0$ otherwise.
        \end{itemize}
        Random variables $Y_i$ are independent and $Pr(Y_i=1)=\frac{r}{21n}$.
        Now we show that if
        $$Y=Y_{t+1}+\cdots +Y_{t+500n}\ge 20r=\frac{21}{25}\mathbb{E}Y, \; \mbox{ where } \mathbb{E}Y=\frac{500r}{21}$$
        then $k(t+500n)\ge\min\{20r,m\}$.
        The first possibility is that in some interaction $i,$ we have $q_i<\frac{r}{21n}$.
        This implies that in interaction $i$ already $20r$ tickets are utilised.
        If~on the other hand, in all interactions $i$ we have $q_i\ge\frac{r}{21n}$,
        we guarantee choosing $20r$ times an agent possessing a ticket to be an initiator of an interaction.
        Thus in both cases $k(t+500n)\ge\min\{20r,m\}$.
        
        By Chernoff inequality
        \[
        \Pr\left(Y < \frac{21}{25} \mathbb{E}Y\right) < \exp\left(-\frac{4^2}{25^2} \frac{500}{21}r/2\right) < \exp\left(-\frac{32}{105} \log^2 n\right) < n^{-\eta}.
        \]
        So $k(t+500n)\ge\min\{20r,n\}$ whp.
    \end{proof}

\begin{lemma}\label{onephase}
    The fragmented time of one phase of partitioning by coloring is $O(\log^2 n)$ whp.     
\end{lemma}

\begin{proof}
    Recall that the number of agents participating in the phase is denoted by~$m$.
    Without loss of generality, assume that there are more agents with keys greater than the pivot's key.
    Let us partition the agents with keys larger than the pivot's key into $2\log n$ disjoint buckets of equal size numbered $i=1,2,\ldots,2\log n$, s.t., any bucket $i$ contains agents with keys greater than those in buckets with numbers smaller than $i$.
    
    We divide the sequence of interactions into interaction periods $[t_0,t_1),[t_1,t_2),[t_2,t_3), \dots$
    such that the first interaction period has length $O(n\log^2 n),$ and each subsequent period is of length $O(n\log n)$.
    We show that in interaction $t_{\log n}$ all $m$ agents are in the group $\mathcal{G}$ whp.
    Each colored agent can utilise at most 21 tickets to move other agents to $\mathcal{G}$.
    This implies that in $t_{\log n}$ at least $m/22$ of these agents have color.
    Since each interaction between groups removes an agent from ${\mathcal G}_{in}$ irreversibly,
    and there are $O(n \log^2 n)$ interactions altogether whp,
    by Lemma \ref{YouOnlyTakeOnceLemma} the fragmented time of one phase is $O(\log^2 n)$.
    The thesis of the lemma is a direct consequence of Lemma \ref{doubleInductiveColoringLemma}, which we formulate below.
    And indeed, since this implies $k(t_{\log n}) \ge\min\left\{20\cdot 2^{\log n-1}\log^2 n, m\right\},$ we get $k(t_{\log n})=m$.
    \end{proof}
    
    \begin{lemma}\label{last}\label{doubleInductiveColoringLemma}
        There is a constant $c>0,$ s.t., if $|[t_0,t_1)|=cn\log^2 n$ and $|[t_{i-1},t_i)|=cn\log n$ for all $i>1$, then whp
        \begin{itemize}
            \item $k(t_i) \ge\min\left\{20\cdot 2^{i-1}\log^2 n, m\right\}$
            \item $Inf_i(t_i)>\min\left\{2^{i-1}\log n, \frac{m}{10\log n}\right\}$
        \end{itemize} 
    \end{lemma}
    
    \begin{proof}
        We proceed by induction on $i$.
        
        We first deal with the base case $i=1$.
        Let $t=t_0+cn\log^2 n$, where a constant $c>0$ is to be specified later on, and $t_1=t+500n$.
        Note that the pivot always colors an uncolored agent during an interaction.
        During interaction period $[t_0,t)$ the pivot is chosen as the initiator at least $\log^2 n$ times whp.
        Indeed, by Chernoff bound the number of times $X,$ for which the pivot is chosen to be an initiator, satisfies for some $c>0$
        $$\Pr(X<\log^2 n)<\Pr(X<(1/c)\mathbb{E}X)\le e^{-(c-1)^2\log^2 n/2c}<n^{-\eta}.$$
        
        If in $[t_0,t)$ group ${\mathcal G}_{in}$ runs out of agents, then $k(t)=m$.
        Otherwise in time $t$ at least $\log^2 n$ agents are colored and by lemma \ref{claim} we have $k(t_1)=k(t+500n)\ge\min\left\{20\log^2 n, m\right\}$ whp.
        
        On the other hand, consider  an uncolored agent from the first bucket.
        No matter whether it belongs to ${\mathcal G}_{in}$ or ${\mathcal G}$, it has probability at least $\frac{1}{nm}$ of interaction with the pivot.
        Due to the way the buckets are defined, see the proof of Lemma~\ref{onephase}, each of them has at least $\frac{m}{4\log n}$ agents.
        If in any interaction of $[t_0,t)$ more than $\frac{m}{10\log n}$ agents are colored, then $Inf_i(t_1)>\min\left\{\log n, \frac{m}{10\log n}\right\}$.
        Otherwise, the probability  of coloring a new agent from the first bucket in each interaction of $[t_0,t)$ is at least $\frac{3}{20n\log n}$.
        By~Chernoff bound, for some $c>0$, the number $X$ of colored agents in the first bucket in $t$ satisfies 
        $$\Pr(X\le\log n)<\Pr(X\le (2/3c) \mathbb{E}X)\le e^{-(3c-2)^2\log^2 n/18c}<n^{-\eta}.$$         
        This proves that $Inf_i(t_1)>\min\left\{\log n, \frac{m}{10\log n}\right\}$ whp.
        
        Now we show the inductive step.
        By the inductive hypothesis we have whp
        $$k(t_{i-1}) >\min\left\{20\cdot 2^{i-2}\log^2 n, m\right\}$$ and
        $$Inf_{i-1}(t_{i-1})>\min\left\{2^{i-2}\log n, \frac{m}{10\log n}\right\}.$$
        
        By Lemma \ref{hyper}, for any $\tau\ge t_{i-1}$ there are whp at least $\frac{0.9k(\tau)}{4\log n}$ agents from $i$th bucket in~${\mathcal G}_1$.
        If~at most half of these agents are colored, the probability of coloring a new agent from the first bucket in $\tau$ is at least
        $$  q= \frac{0.9k(\tau)}{8n\log n} \cdot \frac{Inf_{i-1}(t_{i-1})}{k(\tau)}\ge
               \frac{Inf_{i-1}(t_{i-1})}{10n\log n}\ge
               \min\left\{\frac{2^{i-2}}{10n}, \frac{1}{10\log^2 n}\right\}.$$
        
        The necessary condition for half of these agents to be colored in $\tau$ is
        $$Inf_i(\tau)>\frac{0.9k(\tau)}{8\log n}\ge\frac{k(t_{i-1})}{10\log n}\ge
          \min\left\{2^{i-1}\log n, \frac{m}{10\log n}\right\}.$$
          
        Let $t=t_{i-1}+cn\log n$, where a constant $c$ will be specified later, and $t_i=t+500n$.
        Let $q_{\tau}$ be the probability of coloring a new agent from the first bucket in $\tau$ given the configuration in $\tau$.
        We attribute to each interaction $\tau\in[t_{i-1},t)$ a 0-1 random variable $Y_{\tau}$ as follows.
        \begin{itemize}
            \item If in $\tau$ the number of uncolored agents in the $i$th bucket is at least $\frac{0.9k(\tau)}{8n\log n}$
                and no new agent from $i$th bucket is colored, then $Y_{\tau}=0$
            \item If in $\tau$ the number of uncolored agents in the $i$th bucket is at least $\frac{0.9k(\tau)}{8n\log n}$
                and a new agent from $i$th bucket is colored, then $Y_{\tau}=1$ with probability $q/q_{\tau}$ and $Y_{\tau}=0$ otherwise.
            \item If in $\tau$ the number of uncolored agents in the $i$th bucket is smaller than $\frac{0.9k(\tau)}{8n\log n}$,
                  then we choose $Y{\tau}=1$ with probability $q$.
        \end{itemize}
        For each $\tau\in[t_{i-1},t)$ we have $\Pr(Y_{\tau}=1)=q$ and variables $Y_{\tau}$ are independent. 
        Note that if $Y=\sum_{\tau\in[t_{i-1},t)} Y_{\tau}$ satisfies 
        $$Y\ge y = 20qn\log n\ge\min\left\{2^{i-1}\log n, \frac{n}{10\log n}\right\},$$
        then at least $y$ agents from the $i$th bucket are colored in $[t_{i-1},t)$ or in some $\tau$
        the number of uncolored agents in $\tau$ is smaller than $\frac{0.9k(\tau)}{8n\log n}$.
        Both cases imply
        $$Inf_{i}(t)>\min\left\{2^{i-1}\log n, \frac{n}{10\log n}\right\}.$$
        Since $\mathbb{E}Y=cqn\log n$, by Chernoff inequality there is $c>0,$ such that $\Pr(Y<y)$ is not greater than
        $$\Pr\left(Y<\left(1-\frac{c-20}c\right)cqn\log n\right)
        \le e^{-(c-20)^2qn\log n/2c}\le n^{-\eta}.$$
        Thus we proved that whp
        $$Inf_{i}(t_{i-1})>\min\left\{2^{i-1}\log n, \frac{m}{10\log n}\right\}.$$
        The same bound holds for any of $\log n$ last buckets.
        This guarantees that the number of colored agents in interaction $t$ is at least
        $$\min\left\{2^{i-1}\log n, \frac{m}{10\log n}\right\}\cdot\log n=\min\left\{2^{i-1}\log^2 n, \frac{m}{10}\right\}.$$
        By Lemma \ref{claim} we have whp
        \begin{align*}
            k(t+500n)&\ge\min\left\{20\min\left\{2^{i-1}\log^2 n, \frac{m}{10}\right\} ,m\right\}\\
            &\ge \min\left\{20\cdot 2^{i-1}\log^2 n, m\right\}
        \end{align*}

    \end{proof}

The phase ends when group $\mathcal{G}_{in}$ becomes empty. 
Each agent is relocated from $\mathcal{G}_{in}$ to $\mathcal{G}$ only if it either gets properly colored or it was given a ticket. Since each colored agent has only 21 tickets to utilise, we can formulate the following fact.

\begin{fact}
    After single coloring phase a fraction of at least $\frac{1}{22}$ uncolored agents gets colored. 
\end{fact}

Thus, after $O(\log n)$ iterations of the coloring phase all agents are properly colored. 
This leads to the following theorem.

\begin{theorem}\label{coloring_theorem}
    The partitioning by coloring stabilises in $O(\log^3 n)$ fragmented time.
\end{theorem}

We conclude with the proof of Theorem~\ref{fast-median}.
\begin{proof}[Proof of Theorem~\ref{fast-median}]
    The structure of the solution replicates the logic of a standard median computation protocol. Thus, the correctness of the solution follows from the correctness of the individual routines including leader election, majority computation and partitioning.
    
    Concerning the time complexity,
    leader election and majority computation are implemented in fragmented parallel time $O(\log n)$ whp, see Lemmas \ref{LE-lemma} and \ref{MP}. By Theorem \ref{coloring_theorem}, 
    each partitioning stage takes $O(\log^3 n)$ time whp, and with probability $\frac{1}{2}$ at most $\frac{3}{4}$ candidates remain in $C$. Thus, with high probability after at most $O(\log n)$ iterations of this routine set $C$ is reduced to a singleton containing the median.

    In conclusion, Fast-median protocol stabilises in $O(\log^4 n)$ fragmented parallel time.
\end{proof}

\section{Suitable Programming Environment}~\label{proglang}

It is a common inclination to articulate solutions in any computational model using pseudocode. Such representation enhances the readability and understanding of the proposed solution within the context of the main features of the underlying computational model. Subsequently, this aids in conducting rigorous mathematical analysis. 
%
Various pseudocodes have been explored in the past to address challenges in population protocols, encompassing simple protocols~\cite{DBLP:journals/corr/abs-1802-06872}, separation bounds~\cite{Czerner23,abs-2204-02115}, and leader-based computation~\cite{DBLP:journals/dc/AngluinAE08a}. The latter work forms the basis for our approach, which here focuses on the development of efficient parallel protocols.
Our objective is to champion selective population protocols, enabling the development of simpler and more structured efficient solutions presented at higher programming level. The primary reasons for advocating this approach stem from the partitioning of the state space. Each partition represents the local variables of an independent process, supported by conditional interactions that also facilitate independent interaction availability (zero) tests. This, in turn, eliminates the necessity for a global clocking mechanism through the application of event-based distributed computation.

It is commonly assumed that population protocols provide a computing environment operating on cardinality variables with certain additional restrictions. Each variable corresponds to the number of agents in a given state, and the sum of all variables does not exceed the size of the population. During an interaction, one can modify at most two variables.
As mentioned earlier, selective protocols go beyond traditional population protocols, introducing tools for more structured computing. This includes the sequential or simultaneous (in parallel) execution of independent processes supported by event-based (zero cardinality variable test) synchronization.
Thus, one of the primary challenges in using pseudocode lies in capturing the natural parallelism of population protocols, particularly in effectively representing transition function rules in the corresponding pseudocode.

Below, we present a pseudocode example illustrating the majority protocol \textbf{M-protocol} discussed in Section~\ref{Majority}. The pseudocode employs standard programming instructions, incorporating functions (representing independent population processes), as well as conditional (if) and parallel (while) statements to provide a control flow mechanism, all presented in a Python-like syntax.
The computation relies on two fundamental atomic operations: \texttt{Inc(X)}; \texttt{Dec(Y)}, which facilitate the movement of agents between different states through the incrementing and decrementing of cardinality variables $X$ and $Y$, equivalent with operation $X\mapsto Y$ in \cite{abs-2204-02115}.

\begin{tcolorbox}[left=4em]
\begin{lstlisting}
def first_wins(X,Y,Z): 
  while X$>0$ and Y$>0$ in parallel:  
    Dec(X); Dec(Y); Inc(Inc(Z));  
  if X$>0$: return true
  else: return false
  
def rewrite(X,Y): 
  while X$>0$ in parallel:
    Dec(X); Inc(Y);

def M-protocol():
  if first_wins($R$,$G$,$N$): // $R>0$
    rewrite($R\vee N$,$R^*$);
  else: // $G>0$
    rewrite($G\vee N$,$G^*$);
\end{lstlisting}
\end{tcolorbox}

Although no explicit leader states are used in the code, all computation threads 
can be mapped onto the corresponding rules of the majority protocol.
%
In particular, transition rules (1) and (3) are encoded in lines 1--5 as function {\tt first\_wins} (process based on effective pairng in color cancellation)
which returns value {\em true} when $X>Y,$ and {\em false} otherwise.
The remaining rules are encoded in function {\tt rewrite} (process based on the epidemic) with reference to 
rules (2) and (5) in line 13 and rules (4) and (6) in line 15.

The presented pseudocode illustrates the feasibility of a formally defined high-level programming language that can be compiled into selective population protocols, ensuring both efficiency and readability of the solution. Given that this represents the initial attempt, it is crucial to acknowledge that further exploration and development are required for additional details regarding the formalism, including the syntax and semantics of each instruction.

\section{Final Comment}~\label{final} 
We would like to postulate that the efficiency of selective protocols stand out when tackling problems that demand more extensive memory utilization and yield intricate outputs. Examples of such challenges include the {\em ranking problem}~\cite{BurmanCCDNSX21}, examined recently in the context of leader election in self-stabilizing protocols, and related sorting problem~\cite{DBLP:conf/stacs/GasieniecSS23} studied earlier in the constructors model. For these two problems, no efficient solutions based on a polynomial number of states are currently known in standard population protocols.

In fact, we would like to assert the following.

%
\begin{conjecture}
Any efficient solution to the sorting problem necessitates exponential state space in standard population protocols. 
\end{conjecture}

On the contrary, evidence presented in Section~\ref{ranking} of the Appendix demonstrates that selective protocols can efficiently solve sorting by ranking using much smaller number of states.
Specifically, we present transition rules of an efficient quick-sort-like, selective sorting by ranking. This algorithm has polynomial in $n$ state space, utilises $O(n)$ partitions and stabilises in time $O(\log^2 n)$.

\section{Acknowledgements}
This research was supported in part by the National Science Centre, Poland, under project number 2020/39/B/ST6/03288 (Adam Gańczorz, Tomasz Jurdziński, Grzegorz Stachowiak). 

This research was supported in part by the National Science Centre, Poland, under project number 2021/41/B/ST6/03691 (Jakub Kowalski).






\bibliographystyle{plain}
\bibliography{bibliography}

\begin{thebibliography}{10}

\bibitem{DBLP:conf/soda/AlistarhAEGR17}
D.~Alistarh, J.~Aspnes, D.~Eisenstat, R.~Gelashvili, and R.L. Rivest.
\newblock Time-space trade-offs in population protocols.
\newblock In {\em Proc. {SODA} 2017}, pages 2560--2579, 2017.

\bibitem{DBLP:conf/soda/AlistarhAG18}
D.~Alistarh, J.~Aspnes, and R.~Gelashvili.
\newblock Space-optimal majority in population protocols.
\newblock In {\em Proc. {SODA} 2018}, pages 2221--2239, 2018.

\bibitem{DBLP:conf/opodis/AlistarhGR21}
D.~Alistarh, R.~Gelashvili, and J.~Rybicki.
\newblock Fast graphical population protocols.
\newblock In Quentin Bramas, Vincent Gramoli, and Alessia Milani, editors, {\em
  25th International Conference on Principles of Distributed Systems, {OPODIS}
  2021, December 13-15, 2021, Strasbourg, France}, volume 217 of {\em LIPIcs},
  pages 14:1--14:18. Schloss Dagstuhl - Leibniz-Zentrum f{\"{u}}r Informatik,
  2021.

\bibitem{DBLP:conf/podc/AlistarhRV22}
D.~Alistarh, J.~Rybicki, and S.~Voitovych.
\newblock Near-optimal leader election in population protocols on graphs.
\newblock In {\em {PODC} '22: {ACM} Symposium on Principles of Distributed
  Computing, Salerno, Italy, July 25 - 29, 2022}, pages 246--256. {ACM}, 2022.

\bibitem{DBLP:conf/podc/AngluinADFP04}
D.~Angluin, J.~Aspnes, Z.~Diamadi, M.J. Fischer, and R.~Peralta.
\newblock Computation in networks of passively mobile finite-state sensors.
\newblock In {\em Proc. PODC 2004}, pages 290--299, 2004.

\bibitem{DBLP:journals/dc/AngluinAE08a}
D.~Angluin, J.~Aspnes, and D.~Eisenstat.
\newblock Fast computation by population protocols with a leader.
\newblock {\em Distributed Comput.}, 21(3):183--199, 2008.

\bibitem{BankhamerBBEHKK22}
G.~Bankhamer, P.~Berenbrink, F.~Biermeier, R.~Els{\"{a}}sser, H.~Hosseinpour,
  D.~Kaaser, and Peter Kling.
\newblock Population protocols for exact plurality consensus: How a small
  chance of failure helps to eliminate insignificant opinions.
\newblock In Alessia Milani and Philipp Woelfel, editors, {\em {PODC} '22:
  {ACM} Symposium on Principles of Distributed Computing, Salerno, Italy, July
  25 - 29, 2022}, pages 224--234. {ACM}, 2022.

\bibitem{BerenbrinkGK20}
P.~Berenbrink, G.~Giakkoupis, and P.~Kling.
\newblock Optimal time and space leader election in population protocols.
\newblock In Konstantin Makarychev, Yury Makarychev, Madhur Tulsiani, Gautam
  Kamath, and Julia Chuzhoy, editors, {\em Proceedings of the 52nd Annual {ACM}
  {SIGACT} Symposium on Theory of Computing, {STOC} 2020, Chicago, IL, USA,
  June 22-26, 2020}, pages 119--129. {ACM}, 2020.

\bibitem{BurmanCCDNSX21}
J.~Burman, H{-}L. Chen, H{-}P. Chen, D.~Doty, T.~Nowak, E.E. Severson, and
  C.~Xu.
\newblock Time-optimal self-stabilizing leader election in population
  protocols.
\newblock In Avery Miller, Keren Censor{-}Hillel, and Janne~H. Korhonen,
  editors, {\em {PODC} '21: {ACM} Symposium on Principles of Distributed
  Computing, Virtual Event, Italy, July 26-30, 2021}, pages 33--44. {ACM},
  2021.

\bibitem{DBLP:conf/podc/BurmanCCDNSX21}
J.~Burman, H{-}L Chen, H{-}P Chen, D.~Doty, T.~Nowak, E.E. Severson, and Ch.
  Xu.
\newblock Time-optimal self-stabilizing leader election in population
  protocols.
\newblock In {\em {PODC} '21: {ACM} Symposium on Principles of Distributed
  Computing, Virtual Event, Italy, July 26-30, 2021}, pages 33--44. {ACM},
  2021.

\bibitem{abs-2204-02115}
P.~Czerner.
\newblock Leaderless population protocols decide double-exponential thresholds.
\newblock {\em CoRR}, abs/2204.02115, 2022.

\bibitem{Czerner23}
P.~Czerner.
\newblock Brief announcement: Population protocols decide double-exponential
  thresholds.
\newblock In Rotem Oshman, Alexandre Nolin, Magn{\'{u}}s~M. Halld{\'{o}}rsson,
  and Alkida Balliu, editors, {\em Proceedings of the 2023 {ACM} Symposium on
  Principles of Distributed Computing, {PODC} 2023, Orlando, FL, USA, June
  19-23, 2023}, pages 28--31. {ACM}, 2023.

\bibitem{DBLP:journals/ipl/CzumajL23}
A.~Czumaj and A.~Lingas.
\newblock On parallel time in population protocols.
\newblock {\em Inf. Process. Lett.}, 179:106314, 2023.

\bibitem{DotyEGSUS21}
D.~Doty, M.~Eftekhari, L.~G{\k a}sieniec, E.E. Severson, P.~Uznanski, and
  G.~Stachowiak.
\newblock A time and space optimal stable population protocol solving exact
  majority.
\newblock In {\em 62nd {IEEE} Annual Symposium on Foundations of Computer
  Science, {FOCS} 2021, Denver, CO, USA, February 7-10, 2022}, pages
  1044--1055. {IEEE}, 2021.

\bibitem{GasieniecS21}
L.~G{\k a}sieniec and G.~Stachowiak.
\newblock Enhanced phase clocks, population protocols, and fast space optimal
  leader election.
\newblock {\em J. {ACM}}, 68(1):2:1--2:21, 2021.

\bibitem{DBLP:conf/stacs/GasieniecSS23}
L.~G\k{a}sieniec, P.G. Spirakis, and G.~Stachowiak.
\newblock New clocks, optimal line formation and self-replication population
  protocols.
\newblock In {\em {STACS} 2023, March 7-9, 2023, Hamburg, Germany}, volume 254
  of {\em LIPIcs}, pages 33:1--33:22, 2023.

\bibitem{DBLP:conf/icalp/GuerraouiR09}
R.~Guerraoui and E.~Ruppert.
\newblock Names trump malice: Tiny mobile agents can tolerate byzantine
  failures.
\newblock In {\em {ICALP} 2009, Rhodes, Greece, July 5-12, 2009, Part {II}},
  volume 5556 of {\em Lecture Notes in Computer Science}, pages 484--495.
  Springer, 2009.

\bibitem{geometric}
S.~Janson.
\newblock Tail bounds for sums of geometric and exponential variables.
\newblock {\em Statistics and Probability Letters}, 135(1):1--6, 2018.

\bibitem{Ken98}
J.~Kennedy.
\newblock Thinking is social: Experiments with the adaptive culture model.
\newblock {\em Journal of Conflict Resolution}, 42(1):56--76, 1998.

\bibitem{Klein00}
J.M. Kleinberg.
\newblock The small-world phenomenon: an algorithmic perspective.
\newblock In {\em Proceedings of the Thirty-Second Annual {ACM} Symposium on
  Theory of Computing, May 21-23, 2000, Portland, OR, {USA}}, pages 163--170.
  {ACM}, 2000.

\bibitem{DBLP:conf/podc/KosowskiU18}
A.~Kosowski and P.~Uzna\'nski.
\newblock Brief announcement: Population protocols are fast.
\newblock In {\em Proceedings of {PODC} 2018, Egham, United Kingdom, July
  23-27, 2018}, pages 475--477. {ACM}, 2018.

\bibitem{DBLP:journals/corr/abs-1802-06872}
A.~Kosowski and P.~Uzna\'nski.
\newblock Population protocols are fast.
\newblock {\em CoRR}, abs/1802.06872, 2018.

\bibitem{MS17}
O.~Michail and P.G. Spirakis.
\newblock Network constructors: {A} model for programmable matter.
\newblock In {\em {SOFSEM} 2017, Limerick, Ireland, January 16-20, 2017,
  Proceedings}, volume 10139 of {\em Lecture Notes in Computer Science}, pages
  15--34. Springer, 2017.

\bibitem{minsky_67}
M.L. Minsky.
\newblock {\em Computation: Finite and Infinite Machines}.
\newblock Prentice-Hall Series in Automatic Computation. Prentice-Hall, 1967.

\bibitem{Murty}
S.~Murty.
\newblock Answer to: What is the expected number of comparisons of a sorting
  algorithm that chooses a random pair of elements and swaps if out of order,
  until sorted?
\newblock Quora, April 2019.

\bibitem{NearyW09}
T.~Neary and D.~Woods.
\newblock Four small universal turing machines.
\newblock {\em Fundam. Informaticae}, 91(1):123--144, 2009.

\bibitem{SoloveichikCWB08}
D.~Soloveichik, M.~Cook, E.~Winfree, and J.~Bruck.
\newblock Computation with finite stochastic chemical reaction networks.
\newblock {\em Nat. Comput.}, 7(4):615--633, 2008.

\bibitem{Young}
N.~Young.
\newblock Answer to: Expected number of random comparisons needed to sort a
  list.
\newblock Theoretical Computer Science Stack Exchange, October 2022.

\end{thebibliography}

\newpage

\appendix

\section{Appendix}
\pagenumbering{roman}

In this document, we offer a detailed description of aforementioned fast multiplication protocol, the missing proof for Theorem~\ref{MP}, and an outline for an efficient selective ranking protocol. 

\subsection{Fast multiplication}
\label{FM}

In this section we propose a fast multiplication protocol stabilising in $O(\log n)$ fragmented time.
Recall from Section~\ref{Beyond} that the input of multiplication is formed of two sets of agents $X$ and $Y,$ where $X$ contains all agents in state $x$ and $Y$ all agents in state $y.$ 
The main task is to create a new set $Z$ of agents in state $z,$ s.t., $|X|\cdot|Y|=|Z|.$

The multiplication protocol operates in $r=\lceil\log |Y|\rceil$ rounds which correspond to $r$ bits $b_0,b_1,\dots,b_{r-1}$ of binary expansion of the cardinality $|Y|.$
Each round is split into two stages.
During {\bf Stage 1} the next bit $b_i$ of this expansion is computed starting from the least significant bit $b_0.$ 
This is done by testing parity of the current size of set $Y,$ which is reduced by half to the size $|Y| \ div\ 2$ during each round.
In {\bf Stage 2} set $Z$ is increased by $b_i\cdot 2^i|X|$ agents. Thus, after $j$ rounds the size of $Z$ is $\sum_{i=0}^j b_i\cdot 2^i|X|$, and after $r=\lceil\log |Y|\rceil$ rounds $\sum_{i=0}^{r} b_i\cdot 2^i|X|=|X|\cdot|Y|$.

\begin{figure*}[hb]
\begin{small}
\begin{center}
\mbox{
\begin{tcolorbox}[width=7cm]
{\bf Stage 1: $|Y|\leftarrow |Y|/2$, remainder $b$ }
\begin{alignat*}{5}
\transitionnull{1a}{L_{in}}{\mathcal{G}_Y|null}{\ L_{out}}{}
\transition{1b}{L_{in}}{\mathcal{G}_Y|Y}{\ L}{A}
\transition{2a}{A}{\mathcal{G}_Y|Y}{\ A}{A}
\transitionnull{2b}{A}{\mathcal{G}_Y|null}{\ A'}{}
\transition{3a}{A'}{\mathcal{G}_A|A/A'}{\ B}{free}
\transitionnull{3b}{A'}{\mathcal{G}_A|null}{\ A''}{}
\transition{4a}{L}{\mathcal{G}_A|A''}{\ L^1}{free}
\transitionnull{4b}{L}{\mathcal{G}_A|null}{\ L^0}{}
\transitionnull{5}{B}{\mathcal{G}_A|null}{\ Y}{}
\transitionnull{6a}{L^0}{\mathcal{G}_B|null}{\ L_{in}^0}{}
\transitionnull{6b}{L^1}{\mathcal{G}_B|null}{\ L_{in}^1}{}
\end{alignat*}
\vspace{-3.2em}
%
\end{tcolorbox}
\hspace{10pt}
\begin{tcolorbox}[width=7cm]
{\bf Stage 2: $|Z|\leftarrow |Z|+b |X|;|X|\leftarrow 2|X|$
}
\begin{alignat*}{5}
\transition{1a}{L_{in}^0}{\mathcal{G}_X|X}{\ L'}{C^0}
\transition{1b}{L_{in}^1}{\mathcal{G}_X|X}{\ L'}{C^1}
\transition{2a}{C^0}{\mathcal{G}_X|X}{\ C^0}{C^0}
\transition{2b}{C^1}{\mathcal{G}_X|X}{\ C^1}{C^1}
\transitionnull{2c}{C^0}{\mathcal{G}_X|null}{\ C_\star^0}{}
\transitionnull{2d}{C^1}{\mathcal{G}_X|null}{\ C_\star^1}{}
\transition{3}{C_\star^1}{\mathcal{G}_f|free}{\ C_\star^0}{Z}
\transition{4}{C_\star^0}{\mathcal{G}_f|free}{\ D}{D}
\transitionnull{5a}{D}{\mathcal{G}_C|null}{\ X}{}
\transitionnull{5b}{L'}{\mathcal{G}_C|null}{\ L''}{}
\transitionnull{6}{L''}{\mathcal{G}_D|null}{\ L_{in}}{}
\end{alignat*}
\vspace{-3.5em}
\end{tcolorbox}
}
\end{center}
\end{small}
\caption{Transition rules for the  fast multiplication protocol.}
\label{code:FastMulti}
\end{figure*}

The multiplication protocol utilises the leader computed by LE-protocol from section~\ref{LE}.
The state space $S$ is partitioned into  
$\mathcal{G}_L=\{L_{in},L_{out},L,L^0,L^0_{in},L^1,L^1_{in}\}$,
$\mathcal{G}_A=\{A,A',A''\},$ 
$\mathcal{G}_B=\{B\},$
$\mathcal{G}_C=\{C^0, C^0_\star, C^1, C^1_\star\},$
$\mathcal{G}_D=\{D\},$
$\mathcal{G}_X=\{X\},$
$\mathcal{G}_Y=\{Y\},$
$\mathcal{G}_Z=\{Z\},$ and
$\mathcal{G}_f=\{free\}.$
The transition rules of the two stages are listed in Fig~\ref{code:FastMulti}.

\begin{theorem}\label{multi}
    The multiplication protocol stabilises on a subpopulation $Z$ for which $|Z|=|X|\cdot |Y|$
    where $|X|$ and $|Y|$ are initial sizes of subpopulations $X$ and $Y$.
    The stabilisation time is $O(\log |Y|\cdot\log n)$ whp.
\end{theorem}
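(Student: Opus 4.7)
The plan is to establish correctness by induction on the round number, tracking the sizes of the subpopulations $X$, $Y$ and $Z$, and then bound the fragmented time of each round by $O(\log n)$ whp, yielding the total $O(\log|Y|\cdot\log n)$ bound.

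For correctness, let $r=\lceil\log|Y|\rceil$ and let $b_0b_1\cdots b_{r-1}$ denote the binary expansion of $|Y|$. I maintain the invariant that at the start of round $i$, subpopulation $Y$ has size $\lfloor|Y|/2^i\rfloor$, subpopulation $X$ has size $|X_i|=2^i|X|$, and subpopulation $Z$ has size $|X|\cdot\sum_{j<i} b_j 2^j$. For the inductive step, I verify that Stage 1 halves $|Y|$ and extracts bit $b_i$: rules (1b)--(2b) convert every $Y$-agent into state $A$ or $A'$, rule (3a) pairs them as one $B$ and one $free$, and if $|Y_i|$ is odd the leftover $A'$ becomes $A''$ via rule (3b); the leader then detects $A''$ via rule (4a) and sets the bit to $1$, otherwise rule (4b) sets it to $0$. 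Rule (5) turns each $B$ back into $Y$, producing $\lfloor|Y_i|/2\rfloor$ new $Y$-agents. Stage 2 doubles $|X|$ in every case and adds $|X_i|$ to $|Z|$ exactly when $b_i=1$: the epidemic (rules (1a)/(1b)--(2d)) converts every $X$ into $C^{b_i}_\star$; if $b_i=1$ each $C^1_\star$ fires rule (3) exactly once (its only outgoing transition), emitting a $Z$ and becoming $C^0_\star$; every $C^0_\star$ then fires rule (4) producing two $D$-agents that revert to $X$ via rule (5a). Summing over rounds yields $|Z|=|X|\cdot|Y|$, after which rule (1a) of Stage 1 terminates the protocol.

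For the complexity bound, I decompose each round into a constant number of sub-phases and show each stabilises in $O(\log n)$ fragmented time whp. Leader-driven single-step rules (rules (1a), (1b), (3b), (4a), (4b), (6a), (6b) of Stage 1 and (1a), (1b), (5b), (6) of Stage 2) each require the leader to be selected as initiator once, which happens within $O(\log n)$ parallel time whp. The epidemic phases (rule (2a) in Stage 1, rules (2a)/(2b) in Stage 2) are standard one-way epidemics of $O(\log n)$ parallel time. Mass transitions triggered by an emptiness test (rules (2b), (5) in Stage 1, rule (5a) in Stage 2) require every member of a group to initiate once, which takes $O(\log n)$ parallel time whp. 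For the pairing and free-consumption phases (rule (3a) of Stage 1, rules (3) and (4) of Stage 2) every responder is consumed at most once (an $A$/$A'$ becomes $free$; a $free$ becomes $Z$ or $D$), so Lemma~\ref{YouOnlyTakeOnceLemma} applies; a Janson-bound geometric-sum estimate of the type used in the proofs of Lemmas~\ref{LE-lemma} and~\ref{MP} bounds the number of interactions by $O(n\log n)$, giving $O(\log n)$ fragmented time whp. A union bound over the $O(\log|Y|)$ rounds and the constant number of sub-phases per round completes the argument.

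The main delicate point is the possible interleaving of rules (3) and (4) in Stage 2 when $b_i=1$: a $C^0_\star$ produced by rule (3) can immediately start firing rule (4) before the remaining $C^1_\star$'s have completed theirs. Correctness survives because rule (3) is the unique outgoing transition from $C^1_\star$, so each of the $|X_i|$ initial $C^1_\star$'s fires it exactly once, creating $|X_i|$ new $Z$-agents. The time bound survives because $free$ responders are consumed at most once across both rules combined, so Lemma~\ref{YouOnlyTakeOnceLemma} still applies to the merged interaction sequence. A secondary concern is ensuring that the null-tests delimiting sub-phases fire only when the target groups are truly empty; this is handled by the one-directional flow of states within each round (for example $A\to A'\to A''\to free$ and $C^b\to C^b_\star\to\ldots$), so once a group is drained during a sub-phase it remains drained until the next round begins.
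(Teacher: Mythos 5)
Your proposal is correct and follows essentially the same route as the paper: a rule-by-rule verification that each round halves $|Y|$, doubles $|X|$, extracts the parity bit, and adds $b_i|X_i|$ to $Z$, followed by the observation that each round decomposes into a constant number of epidemics, emptiness tests and one-shot-responder phases, each of fragmented time $O(\log n)$ whp. If anything, your version is slightly more explicit than the paper's (the stated size invariant, the appeal to Lemma~\ref{YouOnlyTakeOnceLemma} for the pairing/free-consumption steps, and the discussion of the interleaving of rules (3) and (4) in Stage~2), but the underlying argument is the same.
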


\begin{proof}
    In {Stage 1} the unique leader can be in one of seven states including: $L_{in}$ in which the leader enters each round of the protocol, $L_{out}$ indicating that the multiplication protocol is finished, $L$ indicating neutral leadership in the first stage, $L^0$ and $L^1$ indicating that the leader knows the parity bit 0 or 1 respectively, $L^0_{in}$ and $L^1_{in}$ indicating that the leader is ready to start the second stage. State $Y$ indicates membership in $Y$ and states $A,A',A''$ and $B$ are used to denote agents temporarily hosting agent from set $Y$.

{Stage 1} assumes the input $L_{in},Y,$ where $L_{in}$ is the leader in the input state, and $Y$ represents the set of agents in state $Y.$
The output is defined by  the leader in state $L^{b}_{in}$ (ready for the second stage) where $b$ is the parity bit of $|Y|$, and reduced in size $Y$ with the cardinality $|Y|=\lfloor\frac{|Y|}{2}\rfloor$.
Alternatively, the leader adopts state $L_{out},$ when 
initially 
$|Y|=0,$ which concludes the multiplication process.
Inside {Stage 1} rule $(1a)$ 
implements
In conclusion, as all actions described above refer to a fixed number of independent epidemic and emptiness test processes, {Stage 2} also operates in $O(\log n)$ fragmented parallel time.

The emptiness test on $Y,$ and rule $(1b)$ initiates the epidemic of state $A$ among agents with the input state $Y.$ It also creates the internal leader $L$ which remains idle until creation of state $A''$ or complete depletion of agents in group $G_A$. Further, rule $(2a)$ supports the epidemic of $A$s until rule $(2b)$ is triggered which concludes this epidemic with creation of state $A'.$ From now on, each agent in state $A$ changes its state inside  group $G_A$ to $A'$ by rule $(2b),$ or becomes a free agent by application of rule $(3a).$ Note that rule $(3a)$ removes two agents from group $G_A$ moving one agent to group $G_B$ and the other to $G_f.$ This guarantees that while the maximum size of $G_A$ 
is 
the same as 
the initial $|Y|,$
the maximum size of $G_B$ 
corresponds to $\lfloor \frac{|Y|}{2}\rfloor.$
On the conclusion of this process, either state $A''$ is created by rule $(3b)$ meaning that an odd element was left in $G_A,$ i.e., $|Y|$ is odd. This is confirmed by rule $(4a)$ which gives the leader state $L^1.$ Otherwise, rule $(4b)$ recognises that there are no agents left in $G_A,$ which is 
reflected
by state $L^0.$ 
In addition, utilising emptiness test on $G_A$ rule $(5)$ relocates all agents from $G_B$ 
to $G_Y.$ Finally, when this relocation is concluded the leader gets ready for {\bf Stage 2} with the help of rule $(6a)$ or $(6b).$
In conclusion, as all actions described above refer to a fixed number of independent epidemic and emptiness test processes, {\bf Stage 2} operates in $O(\log n)$ fragmented parallel time.

In {\bf Stage 2} the unique leader can be in one of five states including the two initial states: $L^0_{in}$ and $L^1_{in}$ indicating whether the computed parity bit in {\bf Stage 1} is 0 or 1, respectively.
The agents in state $X$ are temporarily relocated to group $G_C$ where they initially reside either in state $C^0$ or $C^1.$
This process is initiated by the leader in the relevant state by rule $(1a)$ or $(1b)$, and further continued in the form of epidemic 
performed 
by rule $(2a)$ or $(2b).$
The transfer to group $G_C$ halts when $G_X$ becomes empty. This is witnessed by arrival of state $C^0_{\star}$ or $C^1_{\star}$ during emptiness test utilised in rule $(2c)$ or $(2d)$ respectively.
Each agent in state $C^1_{\star}$ adds one agent to set $Z$ before adopting state $C^0_{\star},$ see rule $(3),$
which implements the operation $|Z|\leftarrow |Z|+|X|$ when the parity bit is $1$.
Rule $(4)$ creates group $G_D$ which becomes of double the size of $G_C.$ And when $G_C$ becomes empty, i.e., group $G_D$ is fully formed, rule $(5a)$ moves all agents from $G_D$ to $G_X$ and rule $(5b)$ instructs the leader to test the emptiness of $G_C.$ And when $G_D$ becomes empty, the leader adopts state $L_{in},$ rule $(6),$ making it ready for the next iteration beginning from the execution of {\bf Stage 1}.  
In conclusion, as all actions described above refer to a fixed number of independent epidemic and emptiness test processes, {\bf Stage 2} operates in $O(\log n)$ fragmented parallel time.
\end{proof}


    


\subsection{Quick-sort like selective ranking protocol}\label{ranking}

Recall that in the \textit{ranking problem}, the primary challenge is to assign unique numbers from the range 1 to $n$ to all agents. Furthermore, in \textit{sorting by ranking}, the order of keys of agents must align consistently with their assigned ranking.

We present on the next page, see Figure~\ref{code:RankAlgo},  pseudocode for a ranking selective protocol that illustrates a parallel solution based on multiple computation threads. The protocol adopts a divide-and-conquer strategy, where initially, all agents belong to the same group with index (offset) $r=1$. After a random pivot $P'$ is chosen (Phase 0), the remaining agents in this group are split into two subpopulations (Phase 1). One subpopulation $B$ (below) is formed by all agents with keys smaller than the pivot's key, while the other, $A$ (above), contains all agents with keys greater than the pivot's key.

The target rank $rank(P')=r+|B|$ of the pivot is then computed based on the index $r$ of the group and the cardinality of $B$ (Phase 2). Finally, a new group with offset $rank(P')+1$ is formed (Phase 3), where all agents from subpopulation $A$ are transferred. After this transition, agents in both groups are prepared for the new recursive round of pivot choosing and splitting (Phase 4), continuing in parallel until all groups are reduced to singletons.


\newpage
\thispagestyle{empty}
\begin{figure*}
\vspace*{-5pt}
\newcommand{\GR}{\mathcal{G}_r}
\newcommand{\GPR}{\mathcal{G}'_r}
\newcommand{\GBR}{\mathcal{G}''_r}
\begin{small}
\begin{center}
\begin{tcolorbox}
In groups $\GR$, $\GPR$, $\GBR$, the index $r$ is the minimal rank of an element within a group.\\
Agent states are encoded as triplets $(t, k, v)$, where $t$ is the type of the agent, used to differentiate its role in the computation process, $k$ is its key, constant for each agent, and $v$ encodes additional value required for computations, its meaning depends on the agent type and phase of the protocol.\\\\
If, inside a transition rule, an element from state's tuple remains unchanged, we mark such situation using $*$. When the value does not matter for further computations $\_$ is used instead.\\
If it is not clear from the context of the rule to which group an agent of type $T$ belongs, we write it explicitly as $T[\mathcal{G}]$.\\\\
Phase 0: Choosing the initial pivot for the entire population of agents. All agents start as $(L,k,\_)\in\mathcal{G}'_1$, where $k$ is each agent's key, know to them and used for comparion with other keys.\\
$N \in\mathcal{G}_1$, $L, P \in\mathcal{G}'_1$
\noindent\begin{alignat*}{5}
\transition{0a,\text{ leader election}}{(L,*,*)}{\mathcal{G}'_1|(L,*,*)}{(L,*,*)}{(N,*,\_)\qquad\qquad\qquad\qquad\qquad}
\transitionnull{0b, \text{leader becomes pivot for the next phase}}{(L,k,*)}{\mathcal{G}'_1|null}{(P,k,k)}{}
\end{alignat*}

\vspace*{-0.4cm}
Phase 1: For every group $\GR$ initially containing only neutral agents $N$, and $\GPR$ with a single pivot $P$, the $\GR$ members are transferred to $\GPR$, divided into being above ($A$) or below ($B$) the pivot, and learn the key of the pivot (stored in the state's third value $v$).\\
$N\in\GR$, $P, P', A, A', B, B' \in\mathcal{G}'_r$
\noindent\begin{alignat*}{5}
\transition{1a, \text{if } k<l \text{: pivot sets neutral as above}}{(P,k,k)}{\GR|(N,l,\_)}{(P,k,k)}{(A,l,k)\qquad\qquad\qquad\qquad\qquad\qquad}
\transition{1a, \text{if } k>l \text{: pivot sets neutral as below}}{(P,k,k)}{\GR|(N,l,\_)}{(P,k,k)}{(B,l,k)}
\transition{1b, \text{if } k<l \text{: above/below sets neutral as above}}{(A/B,*,k)}{\GR|(N,l,\_)}{(A/B,*,k)}{(A,l,k)}
\transition{1b, \text{if } k>l \text{: above/below sets neutral as below}}{(A/B,*,k)}{\GR|(N,l,\_)}{(A/B,*,k)}{(B,l,k)}
\transitionnull{1c, \text{pivot/above rewritten for the next phase}}{(P/A,*,*)}{\GR|null}{(P'/A',*,0)}{}
\transitionnull{1d, \text{below rewritten for the next phase}}{(B,*,*)}{\GR|null}{(B',*,1)}{}
\end{alignat*}

\vspace*{-0.4cm}
Phase 2: Establishing the rank of the pivot $P'$ within the group $\GPR$, which is equal to $|B'|+1$. The state value $v$ is used as a storage for partial results for the rank computation.\\
$P', A',B'\in\mathcal{G}'_r$, $A'',B'',P''\in\mathcal{G}''_r$

\noindent\begin{alignat*}{4}
&\omit\rlap{($2a$, $A,A'$ is rewritten for next phase as $A''$)}  \\
\transitionreltwo{}{(A'/B'/P',*,*)}{\mathcal{G}'_r|(A/A',*,*)}{(A'/B'/P',*,*)}{(A'',*,\_)}
&\omit\rlap{($2b$, counting $B'$, agents already counted and rewritten to next phase)} && \\
\transitionreltwo{}{(P'/B',*,x)}{\mathcal{G}'_r|(B',*,y)}{(P'/B',*,x+y)}{(B'',*,\_)\qquad\qquad\qquad\qquad\qquad\qquad\qquad\qquad\qquad}
&\omit\rlap{($2c$, pivot knows the group size, rewritten for the next phase)} && \\
\transitionrelnulltwo{}{(P',*,x)}{\mathcal{G}'_r|null}{(P'',*,x)}{}
\end{alignat*}

\vspace*{-0.4cm}
Phase 3: Resetting agent, above and below lose their type and become neutrals in proper groups, either $\GR$ or $\mathcal{G}_{r+x+1}$. Additionally, above agents learn the pivot's rank $r+x$.
State value is used for storing index of the agent's old group or its rank within it, depending on particular need. \\
$B'',P''\in\mathcal{G}''_r$, $N'\in \mathcal{G}'_{r+x+1}$
\noindent\begin{alignat*}{5}
\transitionnull{3a,\text{below become neutral in group $\GR$ }}{(B'',*,*)}{\GPR|null}{(N[\GR],*,\_)}{}
\transition{3b, \text{above become neutral $N'\in\mathcal{G}'_{r+x+1}$}}{(P'',*,x)}{\mathcal{G}''_r|(A'',*,*)}{(P'',*,x)}{(N',*,r)}
\transition{3b, \text{above become neutral $N'\in\mathcal{G}'_{r+x+1}$}}{(N',*,r)}{\mathcal{G}''_r|(A'',*,*)}{(N',*,r)}{(N',*,r)}
\transitionnull{3c,\text{$N'$ become standard neutral in $\mathcal{G}_{r+x+1}$ }}{(N',*,r)}{\mathcal{G}''_r|null}{(N[\mathcal{G}_{r+x+1}],*,\_)}{}
\transitionnull{3d,\text{pivot moves to group corresponding to its rank}}{(P'',*,x)}{\mathcal{G}''_r|null}{(P'''[\mathcal{G}_{r+x}],*,r)}{}
\end{alignat*}

\vspace*{-0.4cm}
Phase 4: Randomizing new pivots for groups $\GPR$ (containing below agents) and $\mathcal{G}'_{r+x+1}$ (containing above agents). Pivot's values are set to their keys. Old pivot becomes a sole new agent in group $\mathcal{G}_{r+x}$ with type $R$. Computations proceed to Phase 1.\\
$P''', P'''', P''''', R\in\mathcal{G}_{r+x}$
\noindent\begin{alignat*}{5}
\transitionnull{4a,\text{all $N'$ rewritten to $N$}}{(P''',*,r)}{\mathcal{G}'_{r+x+1}|null}{(P'''',*,r)}{}
\transition{4b, \text{indication of the new pivot in $\mathcal{G}_{r+x+1}$}}{(P'''',*,r)}{\mathcal{G}_{r+x+1}|(N,k,*)}{(P''''',*,r)}{(P[\mathcal{G}'_{r+x+1}],k,k)}
\transitionnull{4b,\text{in case no pivot candidates in $\mathcal{G}_{r+x+1}$}}{(P'''',*,r)}{\mathcal{G}_{r+x+1}|null}{(P''''',*,r)}{}
\transition{4c, \text{indication of the new pivot in $\GR$}}{(P''''',*,r)}{\mathcal{G}_r|(N,k,*)}{(R,*,*)}{(P[\GPR],k,k)}
\transitionnull{4c,\text{in case no pivot candidates in $\GR$}}{(P''''',*,r)}{\mathcal{G}_{r}|null}{(R,*,*)}{}
\end{alignat*}
\vspace{-28pt}
\end{tcolorbox}
\end{center}
\end{small}
\vspace*{-0.1cm}
\caption{Transition rules of quick-sort like selective ranking protocol}
\label{code:RankAlgo}
\end{figure*}



    

\end{document}